\numberwithin{equation}{section}
\protected\def\tikz@nonactivecolon{\ifmmode\mathrel{\mathop\ordinarycolon}\else:\fi}
\newtheorem{theorem}{Theorem}[section]
\newtheorem{lemma}[theorem]{Lemma}
\newtheorem{proposition}[theorem]{Proposition}
\theoremstyle{definition}
\newtheorem{definition}[theorem]{Definition}
\newtheorem*{remark}{Remark}
\crefname{lemma}{Lemma}{Lemmas}
\crefname{definition}{Definition}{Definitions}
\crefname{theorem}{Theorem}{Theorems}
\crefname{conjecture}{Conjecture}{Conjectures}
\crefname{corollary}{Corollary}{Corollaries}
\crefname{section}{Section}{Sections}
\crefname{appendix}{Appendix}{Appendices}
\crefname{figure}{Fig.}{Figs.}
\newcommand{\be}{\begin{align}}
\newcommand{\ee}{\end{align}}
\newcommand\numberthis{\addtocounter{equation}{1}\tag{\theequation}}
\newcommand{\cB}{\mathcal{B}}
\newcommand{\cD}{\mathcal{D}}
\newcommand{\cF}{\mathcal{F}}
\newcommand{\cH}{\mathcal{H}}
\newcommand{\cJ}{\mathcal{J}}
\newcommand{\cK}{\mathcal{K}}
\newcommand{\cN}{\mathcal{N}}
\newcommand{\cO}{O}
\newcommand{\cP}{\mathcal{P}}
\newcommand{\cS}{\mathcal{S}}
\newcommand{\cT}{\mathcal{T}}
\newcommand{\cU}{\mathcal{U}}
\newcommand{\cW}{\mathcal{W}}
\newcommand{\trho}{\tilde{\rho}}
\newcommand{\brho}{\bar{\rho}}
\DeclareMathOperator{\tr}{tr}
\DeclareMathOperator{\id}{id}
\DeclareMathOperator{\supp}{supp}
\DeclareMathOperator*{\argmax}{arg\,max}
\newcommand{\ket}[1]{\left|#1\right\rangle}
\newcommand{\bra}[1]{\left\langle #1\right|}
\newcommand{\Be}[1]{\mathcal{B}_#1}
\newcommand{\Dmax}{D_\text{\normalfont max}}
\newcommand{\Hmax}{H_\text{\normalfont max}}
\newcommand{\Hmin}{H_\text{\normalfont min}}
\newcommand{\Qet}[2]{Q_{\text{\normalfont et}}^{#1,\,#2}}
\newcommand{\Fent}{F_{\text{\normalfont ent}}}
\newcommand{\Favg}{F_{\text{\normalfont avg}}}
\newcommand{\cFavg}{\cF_{\text{\normalfont avg}}}
\newcommand{\Q}[2]{Q^{#1,\,#2}}
\newcommand{\Ceps}[2]{C^{#1,\,#2}}
\newcommand{\sumi}{\sum\nolimits}
\newcommand{\invP}[1]{\Phi^{-1}\left(#1\right)}
\newcommand{\diff}{\text{\normalfont d}}
\newcommand{\ox}{\otimes}
\newcommand{\n}{{\ox n}}
\newcommand{\one}{I}
\newcommand{\eps}{\varepsilon}
\newcommand{\qqquad}{\qquad\qquad\qquad}
\newcommand{\sqtrho}{\sqrt{\tilde{\rho}}}
\newcommand{\dd}{\text{\rm{d}}}
\begin{document}

\title{Decoding quantum information via the Petz recovery map}

\author{Salman Beigi\thanks{
School of Mathematics, Institute for Research in Fundamental Sciences (IPM), Tehran, Iran}
\and Nilanjana Datta\thanks{Statistical Laboratory, Centre for Mathematical
Sciences, University of Cambridge, Wilberforce Road, Cambridge CB3 0WB, UK}
\and Felix Leditzky\footnotemark[2]}

\date{November 2, 2019}

\maketitle

\begin{abstract}
We obtain a lower bound on the maximum number of qubits,  $\Q{n}{\eps}(\cN)$, which can be transmitted over $n$ uses of a quantum channel $\cN$, for a given non-zero error threshold $\eps$. 
To obtain our result, we first derive a bound on the one-shot entanglement transmission capacity of the channel, and then compute its asymptotic expansion up to the second order. In our method to prove this achievability bound, the decoding map, used by the receiver on the output of the channel, is chosen to be the {\em{Petz recovery map}} (also known as the {\em{transpose channel}}). Our result, in particular, shows that this choice of the decoder can be used to establish the coherent information as an achievable rate for quantum information transmission. Applying our achievability bound to the 50-50 erasure channel (which has zero quantum capacity), we find that there is a sharp error threshold above which $\Q{n}{\eps}(\cN)$ scales as $\sqrt{n}$. 
\end{abstract}


\section{Introduction}\label{sec:intro}
The capacity $Q(\cN)$ of a quantum channel $\cN$, for the transmission of quantum information, is referred to as its quantum capacity, and is given by the regularized coherent information of the channel~\cite{Llo97,Sho02,Dev05}:
\begin{align}\label{reg-cap}
Q(\cN) = \lim_{n \to \infty} \frac{1}{n} I_c(\cN^{\otimes n}),
\end{align}
where $I_c(\cN) \coloneqq  \max_\rho I_c(\cN, \rho)$ is the maximum coherent information of the channel over all input states $\rho$ (cf.~\Cref{def:channel-coherent-information}).
The regularization in~\eqref{reg-cap} is necessary in general. However, for the class of {\em{degradable quantum channels}}~\cite{DS05} the coherent information is additive, and hence, for such channels the quantum capacity is given by the single letter formula $Q(\cN)=I_c(\cN)$.

The expression (\ref{reg-cap}) for the quantum capacity is obtained in the so-called {\em{asymptotic, memoryless setting}}, that is, in the limit of infinitely many uses of the channel (which is assumed to be memoryless, i.e., it is assumed that there is no correlation in the noise acting on successive inputs to the channel), under the requirement that the error incurred in the protocol vanishes in the asymptotic limit. It is, however, unrealistic to assume that a quantum channel is used infinitely many times. Instead, it is more meaningful to consider a finite number (say $n$) of uses of the channel and to study the trade-off between the error and the optimal rate of information transmission.

Let $\Q{n}{\eps}(\cN)$ be the maximum number of qubits that can be sent through $n$ uses of a memoryless quantum channel $\cN$, such that the error incurred in the transmission is at most $\eps$ (see \cref{sec:protocols} for precise definitions). We are interested in the behavior of $\Q{n}{\eps}(\cN)$ for large but finite $n$, as a function
of $\eps$. From the above discussion it is evident that $\Q{n}{\eps_n}(\cN)=nQ(\cN) + o(n)$ for an appropriate sequence $\{\eps_n\}_{n\geq 1}$ that vanishes in the asymptotic limit. However, our aim is to find a more refined expansion for $\Q{n}{\eps}(\cN)$. 

\subsection{Main result}
Our main result in this paper is proving a lower bound  on $\Q{n}{\eps}(\cN)$ of the form
\begin{align}\label{eq:SOE-Petz}
\Q{n}{\eps}(\cN)\geq nI_c(\cN) + \sqrt{nV_\eps(\cN)}\, \Phi^{-1}(\eps) + \cO(\log n).
\end{align}
Here $V_\eps(\cN)$ is an $\eps$-dependent characteristic of the channel $\cN$ which we call the $\eps$-\emph{quantum dispersion}; it takes one of two values for the ranges $\eps\in (0,1/2)$ and $\eps\in(1/2,1)$, respectively, as defined in \eqref{v-eps} in \Cref{sec:entropies}. 
Moreover, $\Phi^{-1}(\eps)$ denotes the inverse of the cumulative distribution function of the standard normal distribution (cf.~\Cref{sec:notation}).
We refer to a bound of the form \eqref{eq:SOE-Petz} as a {\em{second order achievability bound}}.

The significance of our proof of~\eqref{eq:SOE-Petz} is that at the decoder we use the \emph{Petz recovery map}~\cite{Petz88}, also known as the \emph{transpose channel}. In previous proofs of achievable bounds on quantum capacity using the \emph{decoupling theorem} \cite{HHWY08,DBWR14}, the decoding map is only given implicitly, its existence being guaranteed by Uhlmann's Theorem.  In our proof however, the decoding map is introduced explicitly and depends solely on the code space. We will explain our method in some detail in \Cref{sec:overview}.

We also examine the bound (\ref{eq:SOE-Petz}) for the 50-50 (symmetric) quantum erasure channel. This is a quantum channel which, with equal probability, transmits the input state undistorted, or replaces it with an ``erasure state'', the latter being a fixed pure state in the orthocomplement of the input Hilbert space of the channel. The capacity of this channel is known to be zero~\cite{BDS97} by the No-cloning theorem, i.e., $Q(\cN)=0$ for the 50-50 quantum erasure channel $\cN$. A stronger result that can be proved based on ideas from~\cite{MW14} is that for any error  $0<\eps< 1/{2}$ we have  $\Q{n}{\eps}(\cN)=O(1)$. (Note that the error criterion used in~\cite{MW14} is different from ours, so the error threshold there is $1/\sqrt{2}$.)
However, our bound (\ref{eq:SOE-Petz}) implies that $\Q{n}{\eps}(\cN) \geq \Omega(\sqrt n)$ for any $\eps> 1/2$. We note that this bound cannot be obtained from the previous one-shot bounds for quantum capacity~\cite{BD10a, DH11}.

\subsection{Related works}

Our lower bound (\ref{eq:SOE-Petz}) is reminiscent of the second order asymptotic expansion for the maximum number
of bits of information which can be transmitted through $n$ uses of a discrete, memoryless
classical channel $\cW$, with an average probability of error of at most $\eps$ denoted by $\Ceps{n}{\eps}(\cW)$.
Such an expansion was first derived by Strassen in 1962~\cite{Str62} and refined by Hayashi~\cite{Hay09} as well as
Polyanskiy, Poor and Verd\'u~\cite{PPV10}. It is given by
\begin{align}\label{c-soa}
\Ceps{n}{\eps}(\cW)= nC(\cW) + \sqrt{nV_\eps(\cW)}\, \Phi^{-1}(\eps) + \cO(\log n),\end{align}
where $C(\cW)$ denotes the capacity of the channel (given by Shannon's formula~\cite{Sha48}) and $V_\eps(\cW)$ is an $\eps$-dependent characteristic of the channel called its $\eps$-{\em{dispersion}}~\cite{PPV10}.

In the last decade there has been a renewal of interest in the evaluation of
second order asymptotics for other classical information-theoretic tasks (see e.g.~\cite{Hay08,Hay09,KH13}
and references therein) and, more recently, even in third-order asymptotics~\cite{KV13a}. The study of
second order asymptotics in Quantum Information Theory was initiated by Tomamichel and Hayashi~\cite{TH13} and Li~\cite{Li14}. The achievability parts of
the second order asymptotics for the tasks studied in~\cite{TH13,Li14} were later also obtained in~\cite{BG13} via the collision relative entropy.

Our second order achievability bound~\eqref{eq:SOE-Petz} is similar in form to~\eqref{c-soa}. Nevertheless, its optimality is open. 
Note that it follows from the strong converse property of the quantum capacity of generalized dephasing channels~\cite{TWW14} that, for such channels, $I_c(\cN)$ is exactly equal to the first order asymptotic
rate (and not just a lower bound on it) for any $\eps\in (0,1)$. Moreover, from the result of~\cite{MW14} it follows that, for degradable channels, the first order asymptotic rate is given by $I_c(\cN)$ for 
$\eps\in(0,1/2)$. 
For the qubit dephasing channel $\mathcal{Z}_\gamma$, Tomamichel et al.~\cite[eq.~(6)]{TBR15} proved a third order asymptotic expansion of the \emph{$\eps$-error $n$-blocklength entanglement transmission capacity} $\Qet{n}{\eps}(\mathcal{Z}_\gamma)$ (see Section \ref{sec:ent-trans} for a definition) that matches our lower bound \eqref{eq:SOE-Petz} up to third and higher order terms.
	Note that for any quantum channel $\cN$ the $n$-blocklength capacities $\Q{n}{\eps}(\cN)$ and $\Qet{n}{\eps}(\cN)$ regularize to the same quantity, the quantum capacity $Q(\cN)$, in the limits $n\to\infty$ and $\eps\to 0$ (see Section \ref{sec:ent-trans} for details).
Moreover, for yet another one-shot variant of the quantum capacity called the \emph{entanglement generation capacity}, a second order asymptotic lower bound similar to \eqref{eq:SOE-Petz} can be obtained using the one-shot achievability bound of~\cite[Proposition 20]{MW14}. Nevertheless, the proof of this one-shot  bound is via the \emph{decoupling theorem}~\cite{HHWY08,DBWR14}. As mentioned above, in contrast to our method, the decoupling theorem does not explicitly provide a decoder.

The Petz recovery map (or transpose channel) was introduced by Petz~\cite{Petz86, Petz88} (see also~\cite{OP93}). In~\cite{BK02} it was shown that, if the Petz recovery map is used as the decoding operation, then the average error incurred in sending an ensemble of commuting states through a quantum channel is at most twice the minimum error.
Later, this map was also used to characterize so-called quantum Markov chain states~\cite{HJPW04}. Furthermore, it was used to study quantum error correcting codes in~\cite{HKM10}. 

Our work should be considered as a new step towards understanding the usefulness of the Petz recovery map. 
In particular it would be interesting to see whether the ideas in our work can be used to show tight achievability bounds for other quantum protocols such as quantum state merging and quantum state redistribution. Another open question in this area is the optimality of the Petz recovery map in the Fawzi-Renner inequality \cite{FR14} for approximate Markov chain states (see \cite{JRS+15} and references therein for a discussion of this question).

\subsection{Overview of our method: Decoding via the Petz recovery map}\label{sec:overview}

Following a standard procedure, our strategy for obtaining the above lower bound \eqref{eq:SOE-Petz} is to first prove a so-called \emph{one-shot} lower bound, i.e., a lower bound on $\Q{1}{\eps}(\cN)$ in terms of the \emph{information spectrum relative entropy} \cite{TH13} (cf.~\Cref{def:inf-spec-rel-entropy}). Then $\Q{n}{\eps}(\cN)$ can be estimated using this one-shot bound applied to the channel $\cN^{\otimes n}$. Since $\Q{n}{\eps}(\cN)=\Q{1}{\eps}(\cN^{\otimes n})$ by definition, we arrive at \eqref{eq:SOE-Petz} by applying the results of~\cite{TH13} for computing the second order asymptotic expansion of the information spectrum relative entropy (see \Cref{prop:entropies-SOA}).

To prove the lower bound~\eqref{eq:SOE-Petz} on $\Q{n}{\eps}(\cN)$, we also study a related information-processing task called entanglement transmission, which entails the transmission of entanglement from the sender to the receiver over multiple uses of the channel $\cN$. It is known~\cite{BKN00,KW04,Dev05} that in the asymptotic, memoryless setting, as well as in the one-shot setting~\cite{KW04,BD10}, quantum capacity and entanglement transmission capacity are related.  
So we prove the lower bound of equation~\eqref{eq:SOE-Petz} for the entanglement transmission capacity and then relate it to the information transmission capacity. This result is stated as \Cref{thm:et-second-order} in \Cref{sec:et-asymptotic}. 

In our coding theorem, as usual, we choose the code subspace randomly, yet fixing the decoding map to be given by the Petz recovery map. To be more precise, let $\cN_{A\rightarrow B}$ be a quantum channel. We choose some positive semidefinite matrix $S_A$ at random, and define the code space as the support of $S$. We then fix the decoder to be 
\begin{align}\label{eq:decoder-32}
\cD = \Gamma_S^{\frac{1}{2}} \circ \cN^*\circ \Gamma_{\cN(S)}^{-\frac{1}{2}}.
\end{align}
Here $\cN^*_{B\rightarrow A}$ is the adjoint map of $\cN$ (defined with respect to the Hilbert-Schmidt inner product), and the map $\Gamma_X$, for an arbitrary $X$ is defined by $\Gamma_X(\rho) = X\rho X^\dagger$. We note that the CPTP map $\cD$ maps any quantum state $\sigma_B$ to a state inside the code space $\supp S_A$, and hence is a valid decoder. The main result of this paper is that with this decoder we can achieve the bound~\eqref{eq:SOE-Petz} on the quantum capacity.  

The technical part of our proof consists of the following steps. We first observe that the \emph{average fidelity}, when the decoder is chosen as in~\eqref{eq:decoder-32}, can be written in terms of the so-called collision relative entropy (cf.~\Cref{def:collision-entropy}). We then obtain a lower bound on the expected value of the average fidelity over the random choice of the code space. This lower bound resembles the coherent information of the channel, and should be considered as our main technical contribution. 

Our method is a generalization of the one introduced in~\cite{BG13} in which the pretty good measurement (or square root measurement) was used as the decoding measurement for the (classical) capacity of classical-quantum channels. In~\cite{BG13} it was observed that the probability of correct decoding using the pretty good measurement can be written in terms of the collision relative entropy. Then the joint convexity of the exponential of collision relative entropy was used to obtain a lower bound on the expected probability of correct decoding over the random choice of the codebook. 

Here, for entanglement transmission, we follow similar ideas: We replace the pretty good measurement with the Petz recovery map, and the probability of correct decoding with the average fidelity. However, the joint convexity of the exponential of the collision relative entropy is not enough to obtain the desired lower bound. We overcome this difficulty by proving a weak monotonicity property of the collision relative entropy under dephasing (\Cref{lem:key-lemma}), which together with the joint convexity gives the final result. This lemma and its proof technique may be of independent interest. 

\subsection{Organization of the paper}
In the following section we fix our notation, introduce the required definitions, and collect some basic tools that are needed to 
prove our main results. In particular, the definitions of the relevant entropic quantities are given in \Cref{sec:entropies}. 
\Cref{sec:ent-trans} contains the main results of this paper in which we prove a one-shot achievability bound and a second order achievability bound. 
In \Cref{sec:example} we discuss the implications of our results for the special case of the 50-50 (symmetric) erasure channel. 
We defer some of the proofs of results in the main text to the appendices.

\section{Preliminaries} \label{sec:prelim}

\subsection{Notation}\label{sec:notation}
Let $\cB(\cH)$ denote the algebra of linear operators acting on 
a Hilbert space $\cH$. In the following we only consider finite-dimensional Hilbert spaces. We denote the set of positive semidefinite operators by $\cP(\cH)$. Let $\cD_{\leq}(\cH)\coloneqq  \{ \rho \in \cP(\cH)\mid \tr \rho \leq 1\}$ be the  set of sub-normalized quantum states, and $\cD(\cH)$ 
be the set of normalized states (density matrices). For a pure state $|\psi\rangle\in\cH$, we use the abbreviation $\psi\equiv |\psi\rangle\langle\psi| \in \cD(\cH)$ for the corresponding density matrix. For $A\in \cP(\cH)$, we write $\supp(A)$ for the support of $A$, i.e., the span of eigenvectors of $A$ corresponding to \emph{positive} eigenvalues. Moreover, we let $\Pi_A$ be the orthogonal projection onto $\supp(A)$. 
For Hermitian $X, Y\in \cB(\cH)$ we let $\lbrace X\leq Y\rbrace$ denote the orthogonal projection on the span of eigenvectors of $X-Y$ with \emph{non-negative} eigenvalues.
Hilbert spaces are often indexed by uppercase letters as in $\cH_A$, and for simplicity of notation we denote $\cH_A\otimes \cH_B$ by $\cH_{AB}$.
Operators acting on $\cH_A$ are distinguished by subscripts as in $X_A\in \cB(\cH_A)$.  The identity operator in $\cB(\cH_A)$ is denoted by $\one_A$, and we often omit it by using the shorthand $X_BY_{AB}\equiv (\one_A\ox X_B)Y_{AB}$.  The completely mixed state is denoted by $\pi_A=I_A/d$ where $d=\dim \cH_A$.

A quantum channel is a linear, completely positive, trace preserving (CPTP) map $\Lambda\colon \cB(\cH_A)\rightarrow\cB(\cH_B)$, which we denote by $\Lambda\colon A\rightarrow B$ or $\Lambda_{A\rightarrow B}$. Given a Stinespring isometry $\cU_\Lambda\colon \cH_A\rightarrow \cH_{BE}$ of a quantum channel $\Lambda_{A\rightarrow B}$ such that $\Lambda(\rho_A) = \tr_E(\cU_\Lambda\rho_A\cU_\Lambda^\dagger)$, we define the complementary channel $\Lambda^c\colon A\rightarrow E$ as $\Lambda^c(\rho_A) = \tr_B(\cU_\Lambda\rho_A \cU_\Lambda^\dagger)$. The identity channel acting  on $\cB(\cH_A)$ is denoted by $\id_A$.
The algebra $\cB(\cH)$ is equipped with the Hilbert-Schmidt inner product: for $X, Y\in \cB(\cH)$ it is defined as 
$\langle X, Y\rangle = \tr(X^{\dagger}Y)$ where $X^{\dagger}$ is the adjoint of $X$. Then for a quantum channel $\Lambda\colon A\rightarrow B$ we may consider its adjoint map $\Lambda^{*}\colon B\rightarrow A$ determined by $\langle X_A, \Lambda^{*}(Y_B)\rangle  = \langle \Lambda(X_A), Y_B\rangle$. Let $\cH_A\cong\cH_B$ be isomorphic Hilbert spaces with $\dim\cH_A=\dim\cH_B=d$ and fix bases $\lbrace |i_A\rangle\rbrace_{i=1}^d$ and $\lbrace |i_B\rangle\rbrace_{i=1}^d$ in them. Then we define a maximally entangled state (MES) of Schmidt rank $m$ to be
\begin{align}\label{eq:MES}
|\Phi_{AB}^m\rangle \coloneqq  \frac{1}{\sqrt{m}} \sum_{i=1}^m |i_A\rangle\ox |i_B\rangle \in\cH_M\ox\cH_{M'},
\end{align}
where $\cH_M\subseteq \cH_A$, $\cH_{M'}\subseteq \cH_B$, and $\cH_M\cong\cH_{M'}$ are isomorphic subspaces with bases $\lbrace |i_A\rangle\rbrace_{i=1}^m$ and $\lbrace |i_B\rangle\rbrace_{i=1}^m$. If $m=d$, we abbreviate $|\Phi_{AB}\rangle\equiv |\Phi_{AB}^d\rangle$. We define the Choi state $\tau_{A'B}=\cJ(\Lambda)$  of a quantum channel $\Lambda_{A\rightarrow B}$ by 
$\tau_{A'B} \coloneqq  (\id_{A'}\ox \Lambda)(\Phi_{A'A}),$
where $\cH_{A'}\cong\cH_A$.

The inverse of the cumulative distribution function of a standard normal random variable is defined by
$\Phi^{-1}(\eps) \coloneqq  \sup\lbrace z\in\mathbb{R}\mid\Phi(z)\leq \eps\rbrace,$
where $\Phi(z) = \frac{1}{\sqrt{2\pi}}\int_{-\infty}^z e^{-t^2/2}dt$. Since $\Phi^{-1}$ is continuously differentiable, we have the following lemma:
\begin{lemma}[{\cite{DL14b}}]\label{lem:phi-trick}
	Let $\eps>0$, then $$\sqrt{n}\,\Phi^{-1}\left(\eps\pm\frac{1}{\sqrt{n}}\right) = \sqrt{n}\,\Phi^{-1}(\eps)\pm \left(\Phi^{-1}\right)'(\xi)$$ for some $\xi$ with $|\xi-\eps|\leq\frac{1}{\sqrt{n}}$.
\end{lemma}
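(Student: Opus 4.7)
The plan is to deduce the identity as a direct application of the mean value theorem to the function $\Phi^{-1}$. First, I would note that $\Phi$ is strictly increasing and infinitely differentiable on $\R$ with $\Phi'(z) = \frac{1}{\sqrt{2\pi}}\, e^{-z^2/2} > 0$ everywhere, so by the inverse function theorem $\Phi^{-1}\colon (0,1) \to \R$ is continuously differentiable, with derivative $(\Phi^{-1})'(\eps) = \sqrt{2\pi}\, e^{\Phi^{-1}(\eps)^2/2}$. For any fixed $\eps \in (0,1)$ and for $n$ large enough that $\eps \pm 1/\sqrt{n}$ still lies inside $(0,1)$, both sides of the claimed identity are therefore well-defined.

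Next, I would apply the mean value theorem to $\Phi^{-1}$ on the closed interval with endpoints $\eps$ and $\eps \pm 1/\sqrt{n}$. This produces a point $\xi$ lying strictly between those endpoints, and hence automatically satisfying $|\xi - \eps| \leq 1/\sqrt{n}$, such that
\begin{equation*}
\Phi^{-1}\!\left(\eps \pm \frac{1}{\sqrt{n}}\right) - \Phi^{-1}(\eps) = \left(\pm \frac{1}{\sqrt{n}}\right) (\Phi^{-1})'(\xi).
\end{equation*}
Multiplying through by $\sqrt{n}$ immediately yields the stated formula, with the $\pm$ sign on the right-hand side matching the $\pm$ sign appearing in the argument of $\Phi^{-1}$ on the left.

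There is no genuine obstacle in this argument: the whole content is the mean value theorem combined with the fact that $\Phi^{-1}$ is $C^1$ on $(0,1)$. The only bookkeeping point is ensuring the perturbed argument stays in the open interval $(0,1)$, which is automatic once $n \geq \max\lbrace 1/\eps^2,\, 1/(1-\eps)^2\rbrace$ — precisely the regime in which this lemma is invoked in the subsequent asymptotic $n\to\infty$ expansions of the paper.
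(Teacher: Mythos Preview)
Your proof is correct and matches the paper's approach: the paper does not give an explicit proof but simply notes that the lemma follows because $\Phi^{-1}$ is continuously differentiable, which is precisely the mean value theorem argument you wrote out in full.
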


\subsection{Distance measures and entropic quantities}\label{sec:entropies}
In this subsection we collect definitions of useful distances measures and entropic quantities that we employ in our proofs.
We refer to \Cref{sec:props-entropies} for a list of properties of these quantities.
\begin{definition}[Fidelities]\label{def:distances}
Let $\rho,\sigma\in\cD(\cH)$.
\begin{enumerate}[(i)]
\item The \emph{fidelity} $F(\rho,\sigma)$ between $\rho$ and $\sigma$ is defined as
$$F(\rho,\sigma) \coloneqq  \|\sqrt{\rho}\sqrt{\sigma}\|_1=\tr\sqrt{\sqrt{\rho}\sigma\sqrt{\rho}}.$$ For pure states $\psi,\varphi\in\cH$, the fidelity reduces to $F(\psi,\varphi) = |\langle \psi|\varphi\rangle|.$ We use this definition of fidelity even if one of $\rho$ or $\sigma$ is sub-normalized. We also use the notation $F^2(\rho,\sigma)\equiv (F(\rho,\sigma))^2$.
\item \cite{Nie02} The \emph{average fidelity} of a quantum operation $\Lambda$ acting on $\cB(\cH)$ is defined by
\begin{align*}
\Favg(\Lambda;\cH) \coloneqq \int_{\phi\in\cH} \diff\mu(\phi) \bra{\phi} \Lambda(\phi) \ket{\phi},
\end{align*}
where $\diff\mu(\phi)$ is the uniform normalized (Haar) measure on unit vectors $\ket \phi\in\cH$.

\end{enumerate}
\end{definition}

\begin{definition}[Relative entropy and quantum information variance]
Let $\rho\in\cD(\cH)$ and $\sigma\in\cP(\cH)$ be such that $\supp\rho\subseteq\supp\sigma$. Then the \emph{quantum relative entropy} is defined as $$D(\rho\|\sigma)\coloneqq  \tr[\rho(\log\rho-\log\sigma)],$$
and the \emph{quantum information variance} is defined as 
\begin{align*}
V(\rho\|\sigma)\coloneqq  \tr\left[\rho(\log\rho-\log\sigma)^2\right]-D(\rho\|\sigma)^2.
\end{align*}
\label{def:rel-ent-inf-var}
\end{definition}
The von Neumann entropy of a state $\rho\in\cD(\cH)$ is given by $H(\rho)\coloneqq -\tr (\rho\log\rho) = -D(\rho\|\one)$, and we write $H(A)\equiv H(\rho_A)$.
For a bipartite state $\rho_{AB}\in\cD(\cH_A\ox\cH_B)$, the conditional entropy $H(A|B)_\rho$ and mutual information $I(A; B)_\rho$ are defined as 
$H(A|B)_\rho \coloneqq H(AB)_\rho - H(B)_\rho = -D(\rho_{AB}\|\one_A\ox\rho_B)$  and $I(A; B)_\rho \coloneqq H(A)_\rho - H(A|B)_\rho = D(\rho_{AB}\|\rho_A\ox\rho_B)$, respectively.

The following entropic quantities play a key role in our proofs. 
\begin{definition}[Max-relative entropy; {\cite{Dat09,Ren05}}]\label{def:max-relative-entropy}
 Let $\rho\in\cD(\cH)$ and $\sigma\in\cP(\cH)$, then the \emph{max-relative entropy} $\Dmax(\rho\|\sigma)$ is defined as
\begin{align*}
\Dmax(\rho\|\sigma) \coloneqq  \inf \left\lbrace \lambda\in\mathbb{R} \mid \rho\leq 2^\lambda\sigma\right\rbrace.
\end{align*}
For $\eps\in(0,1)$, the \emph{smooth max-relative entropy} $\Dmax^\eps(\rho\|\sigma)$ is defined as
\begin{align*}
\Dmax^\eps(\rho\|\sigma) \coloneqq  \min_{\brho\in\Be{\eps}(\rho)}\Dmax(\brho\|\sigma).
\end{align*}
where $\Be{\eps}(\rho)\coloneqq  \lbrace\trho\in\cD_\leq(\cH)\mid F^2(\rho,\trho)\geq 1-\eps^2\rbrace$
is a ball of \emph{sub-normalized} states around the (normalized) state $\rho$.
\end{definition}

\begin{definition}[Conditional min- and max-entropy; {\cite{Ren05,Tom12}}]\label{def:min-entropy}
 Let $\rho_{AB}\in\cD_\leq(\cH_{AB})$, then the \emph{conditional min-entropy} $\Hmin(A|B)_\rho$ is defined as 
\begin{align*}
\Hmin(A|B)_\rho \coloneqq  -\min_{\sigma_B\in\cD(\cH_B)} \Dmax(\rho_{AB}\|\one_A\ox \sigma_B).
\end{align*}
For $\eps\in(0,1)$, the \emph{smooth conditional min-entropy} is defined as
\begin{align*}
\Hmin^\eps(A|B)_\rho \coloneqq  \max_{\brho_{AB}\in\Be{\eps}(\rho_{AB})} \Hmin(A|B)_{\brho},
\end{align*}
and the \emph{smooth conditional max-entropy} is defined as
\begin{align*}
\Hmax^\eps(A|B)_\rho \coloneqq \min_{\brho_{AB}\in\Be{\eps}(\rho_{AB})} \max_{\sigma_B}\log F^2(\brho_{AB},\one_A\ox\sigma_B).
\end{align*}
\end{definition}

The collision relative entropy is a central quantity in \Cref{sec:ent-trans}. Its conditional version was first defined by Renner \cite{Ren05} in the quantum case.
\begin{definition}[Collision relative entropy]\label{def:collision-entropy}
For $\rho\in\cD(\cH)$ and $\sigma\in\cP(\cH)$ the \emph{collision relative entropy} is defined as
\begin{align*}
D_2(\rho\|\sigma) \coloneqq \log \tr\left(\sigma^{-1/2}\rho\sigma^{-1/2}\rho\right).
\end{align*}
\end{definition}

The second order asymptotic analysis in \Cref{sec:ent-trans} relies on the information spectrum relative entropy, whose quantum version was first introduced in~\cite{TH13}:
\begin{definition}[Information spectrum relative entropy]\label{def:inf-spec-rel-entropy}
Let $\rho\in\cD(\cH)$, $\sigma\in\cP(\cH)$, and $\eps\in (0,1)$. The \emph{information spectrum relative entropy} is defined as
\begin{align*}
D_s^\eps(\rho\|\sigma) \coloneqq \sup\lbrace \gamma\in\mathbb{R}\mid \tr\left(\rho \left\lbrace \rho\leq 2^\gamma \sigma\right\rbrace\right) \leq \eps \rbrace.
\end{align*}
\end{definition}

Finally, we define two important quantities in our discussion, the coherent information $I_c(\cN)$ and the $\eps$-quantum dispersion $V_\eps(\cN)$ of a quantum channel.
\begin{definition}[Coherent information of a quantum channel]\label{def:channel-coherent-information}
Let $\cN\colon A\rightarrow B$ be a quantum channel with Stinespring isometry $\cU_\cN\colon \cH_A\rightarrow \cH_{BE}$. Furthermore, for $\rho_A\in\cD(\cH)$ let $\ket{\Psi_{RA}^\rho}$ be a purification of $\rho_A$ and set 
\begin{align} \label{eq:omega}
\ket{\omega_{RBE}} = (I_R\ox \cU_{\cN})\ket{\Psi_{RA}^\rho}.
\end{align} Then the \emph{coherent information of the quantum channel for the input state $\rho_A$} is defined as
\begin{align*}
I_c(\cN, \rho_A) \coloneqq D(\omega_{RB}\|\one_R\ox \omega_{B}) = - H(R|B)_\omega. 
\end{align*}
whereas the \emph{coherent information of the quantum channel} is defined as
\begin{align}\label{coh-info}
I_c(\cN) \coloneqq \max_{\rho_A} I_c(\cN, \rho_A).
\intertext{Define the set $\cS_c(\cN)$ of quantum states achieving the maximum in \eqref{coh-info} as}
\cS_c(\cN) \coloneqq \argmax_{\rho_A} I_c(\cN, \rho_A),\label{eq:Pi_c}
\end{align}
then we can introduce the \emph{$\eps$-quantum dispersion} $V_\eps(\cN)$ of the quantum channel $\cN$:
\begin{align}\label{v-eps}
V_\eps(\cN) \coloneqq \begin{cases}
\min_{\rho_A\in\cS_c(\cN)} V(\omega_{RB}\|I_R\ox\omega_B) & \text{if }\eps\in (0,1/2)\\
\max_{\rho_A\in\cS_c(\cN)} V(\omega_{RB}\|I_R\ox\omega_B) & \text{if }\eps\in (1/2,1),
\end{cases}
\end{align}
where for each $\rho_A$ the state $\omega_{RBE}$ is defined as in \eqref{eq:omega}.
\end{definition}
\begin{remark}
Since any two purifications of $\rho_A$ are connected by a unitary acting on the purifying system $R$ alone, the coherent information as well as the $\eps$-quantum dispersion are independent of the chosen purification.
\end{remark}


\section{Entanglement transmission via the transpose channel method}\label{sec:ent-trans}

\subsection{Protocols}\label{sec:protocols}
In this section we define the protocols of \emph{quantum information transmission} and \emph{entanglement transmission} and the associated capacities. Suppose Alice and Bob are allowed to communicate via a quantum channel $\cN\colon  \cB(\cH_{A}) \mapsto \cB(\cH_B)$, where 
$\cH_{A}$ denotes the Hilbert space of the system whose state Alice prepares as 
input to the channel, and $\cH_B$ denotes the
Hilbert space of the output system of the channel that Bob receives.

In quantum information transmission, Alice's task is to convey to Bob, over a single use of the channel $\cN$, an arbitrarily chosen pure quantum state, from some Hilbert space $\cH_M\subseteq\cH_A$ of dimension $m$, with an error of at most $\eps$. More precisely, we define a one-shot $\eps$-error quantum code as a triple $(m, \cH_M, \cD)$, where $\cH_M\subseteq \cH_{A}$ is a subspace with $m=\dim \cH_M$ and $\cD: \cB(\cH_B) \rightarrow \cB(\cH_A)$ is a CPTP map such that
\begin{align}\label{f-sq-q}
F_{\text{avg}}(\cN; \cH_M) =  \int_{\ket{\varphi} \in \cH_M} \dd\mu(\varphi)
F^2\left( \varphi, (\cD \circ \cN) (\varphi)\right) \geq 1 - \eps.
\end{align}
Here the embedding $\cH_M\subseteq \cH_A$ is Alice's encoding map and $\cD$ is Bob's decoding map. Note that here the error of the protocol is measured in terms of the \emph{average} fidelity (and not worst case fidelity). Also, note that $\Favg(\cN; \cH_M)$ as defined above, depends on $\cD$, but we suppress it for notational simplicity.   

The {\em{$\eps$-error one-shot quantum capacity}} of the channel $\cN$ is then defined as:
\begin{align}\label{eq:1-qinfo-cap}
\Q{1}{\eps}(\cN) \coloneqq  \sup \{ \log m : \exists (m, \cH_M, \cD) \text{ such that }  F_{\text{avg}}(\cN; \cH_M) \geq 1- \eps\}.
\end{align}
For $n$ successive uses of the memoryless channel, we define the {\em{$\eps$-error $n$-blocklength quantum capacity}} of the channel $\cN$ as
\begin{align}\label{eq:n-qinfo-cap}
\Q{n}{\eps}(\cN) \coloneqq  \Q{1}{\eps}\left(\cN^{\otimes n}\right).
\end{align}

We also consider the entanglement transmission protocol~\cite{DS05} in the one-shot setting. 
In this protocol, Alice's task is to convey to Bob, over a single use of the channel $\cN$, a quantum state on a system $M$ in her possession, such that the entanglement between $M$ and some reference system $R$ that is inaccessible to her is preserved in the process. More precisely, we define a one-shot $\eps$-error entanglement transmission code as a triple $(m, \cH_M, \cD)$, where $\cH_M\subseteq \cH_{A}$ is a subspace with $m=\dim \cH_M$ and $\cD\colon  \cB(\cH_B) \rightarrow \cB(\cH_A)$ is a CPTP map such that
\begin{align}\label{f-sq-ent}
 F_{\text{ent}}(\cN; \cH_M)\coloneqq  F^2(\Phi^m_{RA},(\id_R\otimes \cD\circ\cN)(\Phi^m_{RA})) \geq 1-\eps,
\end{align}
where $\Phi^m_{RA}$ is a maximally entangled state of Schmidt rank $m=\dim \cH_M$ between the subspace $\cH_M\subseteq \cH_A$ and a reference system $R$.  

The {\em{$\eps$-error one-shot entanglement transmission capacity}} 
of the channel $\cN$ is then defined as 
\begin{align}\label{eq:1-et-cap}
\Qet{1}{\eps}(\cN) \coloneqq  \sup \{ \log m : \exists (m, \cH_M, \cD) \text{ such that } F_{\text{ent}}(\cN; \cH_M) \geq 1- \eps\}.
\end{align}
We define
\begin{align}\label{eq:n-et-cap}
\Qet{n}{\eps}(\cN) \coloneqq  \Qet{1}{\eps}\left(\cN^{\otimes n}\right),
\end{align}
and call it the {\em{$\eps$-error $n$-blocklength entanglement transmission capacity}} of the channel $\cN$.

\medskip
$\Q{n}{\eps}(\cN)$ and $\Qet{n}{\eps}(\cN)$ are related quantities. Indeed, by \Cref{lem:ave-ent-fid} we have 
$$F_{\text{avg}}(\cN; \cH_M) = \frac{m F_{\text{ent}}(\cN; \cH_M)+1}{m+1}\geq F_{\text{ent}}(\cN; \cH_M),$$
which means that 
\begin{align}\label{eq:Q-Qet-compare}
\Q{n}{\eps}(\cN) \geq \Qet{n}{\eps}(\cN) .
\end{align}

For the $n$-blocklength capacity $\Q{n}{\eps}(\cN)$ given by \eqref{eq:n-qinfo-cap} above, the corresponding capacity in the asymptotic memoryless setting is defined as
\begin{align}\label{eq:quantum-cap}
Q(\cN) \coloneqq \lim_{\eps\rightarrow 0}\lim_{n\rightarrow \infty} \frac{1}{n}\Q{n}{\eps}(\cN).
\end{align}
Note that the results in \cite{BKN00} and \cite{KW04} show that substituting $\Qet{n}{\eps}(\cdot)$ on the right-hand side of \eqref{eq:quantum-cap} yields the same quantity $Q(\cN)$.
By the Lloyd-Shor-Devetak theorem~\cite{Llo97,Sho02,Dev05} it is given by the following expression:
\begin{align}\label{eq:LSD}
Q(\cN) = \lim_{n\rightarrow \infty} \frac{1}{n} I_c(\cN^{\ox n}).
\end{align}

Note that if $\cN$ is degradable, then by the additivity of coherent information for degradable channels~\cite{DS05} the regularized expression in \eqref{eq:LSD} reduces to the \emph{single letter} expression
\begin{align}\label{eq:quantum-cap-degradable}
Q(\cN) = I_c(\cN).
\end{align}

\subsection{One-shot bound}

The main result of this section is a lower bound on the one-shot $\eps$-error entanglement transmission capacity. We then use this lower bound in \Cref{sec:et-asymptotic} to prove a lower bound on the $n$-blocklength capacities. Throughout this section, we abbreviate $\rho\equiv\rho_A$.

\begin{theorem}[One-shot bound]\label{thm:et-one-shot}
Let $\cN\colon A\rightarrow B$ be a quantum channel and fix $\eps\in(0,1)$. Then for every density matrix $\rho$, the one-shot $\eps$-error entanglement transmission capacity, defined through \eqref{eq:1-et-cap}, satisfies:
\begin{align*}
\Q{1}{\eps}(\cN)\geq\Qet{1}{\eps}(\cN) \geq \min\left\lbrace D_s^{\delta_1}(\id_R\ox \cN(\Psi^\rho)\|I_R\ox \cN(\rho)) + \log \frac{\eps_1-\delta_1}{1-\eps_1},\, D_s^{\delta_2}(\Psi^\rho\| I_R\ox \rho) + \log \frac{\eps_2-\delta_2}{1-\eps_2} \right\rbrace.
\end{align*}
Here, $\Psi^\rho$ is a purification of $\rho$, and for $i=1,2$ the parameters $\eps_i,\delta_i>0$ are chosen such that
\begin{align}\label{eq:def-tilde-eps}
\eps = \left( \eps_1 + \sqrt{\tr(\rho^2)+\eps_2}\right)\left(1+\frac{1}{d}\right),
\end{align}
with $d=\dim\cH_A$ and $0\leq \delta_i\leq \eps_i$.
\end{theorem}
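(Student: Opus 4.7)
The plan is to use a randomly chosen code subspace together with the Petz recovery map as the fixed decoder, to lower bound the expected average fidelity in terms of the collision relative entropy, and then to derandomize and smooth to arrive at the information spectrum relative entropies appearing in the statement. Concretely, let $\Psi^\rho_{RA}$ purify $\rho$, let $\Pi$ be a Haar-random rank-$m$ projection on $\supp\rho$, set $S = \rho^{1/2}\Pi\rho^{1/2}$, take the code subspace to be $\cH_M = \supp S$, and use the decoder $\cD$ from \eqref{eq:decoder-32}. Since $\Fent(\cN;\cH_M)$ is related to $\Favg(\cD\circ\cN;\cH_M)$ by Nielsen's relation, which will ultimately contribute the prefactor $(1+1/d)$ in \eqref{eq:def-tilde-eps}, it suffices to lower bound the expected average fidelity over the random code.

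Substituting the Petz decoder into the definition of $\Favg$ and unfolding $\cN^*$ through the Hilbert-Schmidt pairing rewrites the integrand as $\tr\bigl[\cN(S^{1/2}\phi S^{1/2})\,\cN(S)^{-1/2}\cN(\phi)\cN(S)^{-1/2}\bigr]$. Averaging this over Haar-random pure states $\phi\in\cH_M$ and over the random subspace, standard two-design identities should produce, up to a correction governed by $\tr(\rho^2)$ that captures the random subspace being misaligned with $\supp\rho$, an expression whose leading term is essentially $2^{-D_2((\id_R\otimes\cN)(\Psi^\rho)\,\|\,I_R\otimes\cN(\rho))}$. The technical crux lies at this stage: the joint convexity of $2^{D_2}$ used in \cite{BG13} is not sufficient because the random code also produces off-diagonal cross terms with respect to the spectral decomposition of $\cN(S)$; these are exactly what is controlled by the weak monotonicity of $D_2$ under pinching proved in \Cref{lem:key-lemma}, which I view as the main obstacle and the main technical contribution in the proof.

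Once the Haar-averaged fidelity has been reduced to a collision entropy, I would smooth the two states $(\id_R\otimes\cN)(\Psi^\rho)$ and $\Psi^\rho$ inside $\delta_1$- and $\delta_2$-balls, respectively, converting the $D_2$ estimates into information spectrum relative entropies $D_s^{\delta_1}$ and $D_s^{\delta_2}$ and producing the logarithmic corrections $\log[(\eps_i-\delta_i)/(1-\eps_i)]$ via a Markov-type step that turns a bound on expected fidelity into the existence of a single deterministic code with fidelity at least $1-\eps$. The minimum in the statement arises because the decoding error in $B$ and the code-space atypicality error with respect to $\rho$ are two independent failure modes, each independently constraining $\log m$; a Cauchy-Schwarz estimate applied to the Haar second moment combines them into the summand $\eps_1 + \sqrt{\tr(\rho^2)+\eps_2}$ of \eqref{eq:def-tilde-eps}, and Nielsen's relation finally supplies the $(1+1/d)$ prefactor, completing the plan.
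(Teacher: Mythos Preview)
Your overall architecture matches the paper's proof: random code via $S=\sqrt{\tilde\rho}\,P\sqrt{\tilde\rho}$ with $P$ a Haar rank-$m$ projector, Petz decoder, split the expected average fidelity as a main term $T_1$ minus a correction $T_2$ (the correction is what produces $\sqrt{\tr(\rho^2)+\eps_2}$), bound $T_1$ via collision relative entropy together with joint convexity and \Cref{lem:key-lemma}, convert to $D_s$, derandomize, and finally use \Cref{lem:ave-ent-fid} for the $(1+1/d)$ factor.

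Two mechanism-level descriptions need fixing, and getting them wrong would stall the execution. First, \Cref{lem:key-lemma} is not about ``off-diagonal cross terms with respect to the spectral decomposition of $\cN(S)$'': the dephasing $\cT_U$ acts on the \emph{reference} system $R$, not on $B$. After Fubini and joint convexity over the random $(m{-}1)$-dimensional complement $P'$, the remaining Haar integral over $\ket\psi$ is rewritten as an integral over unitaries $U$, which puts $\cT_U\otimes(\cN\circ\Gamma_{\sqrt{\tilde\rho}})(\Phi)$ in the first slot of $D_2$; the lemma replaces $\cT_U$ by $\id_R$ at cost $1/d$, and one more joint convexity over $U$ then collapses the second slot to (essentially) $\pi_R\otimes\cN(\rho)$. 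Second, the passage $D_2\to D_s^{\delta_i}$ is not by smoothing states in $\delta$-balls but by the direct inequality of \Cref{lem:collision-spectrum},
\[
\exp D_2\!\bigl(\rho\,\big\|\,\lambda\rho+(1-\lambda)\sigma\bigr)\ \geq\ (1-\delta)\Bigl[\lambda+(1-\lambda)\,2^{-D_s^\delta(\rho\|\sigma)}\Bigr]^{-1}.
\]
Solving this for $\log m$ (after absorbing the $1/(md)$ prefactor that came from the dephasing step) is exactly where both the $D_s^{\delta_i}$ bound and the additive $\log[(\eps_i-\delta_i)/(1-\eps_i)]$ arise; no separate Markov or smoothing step is involved there.
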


\paragraph{Outline of the proof:}
As explained in the introduction, the proof of this theorem follows from similar ideas as those in the proof of the achievability bound of~\cite{BG13} for the classical capacity of quantum channels. Our proof consists of the following steps:
\begin{enumerate}[(a)]
\item Using random encoding at the sender's side
\item Fixing the decoding map to be the Petz recovery map (transpose channel)
\item Writing the average fidelity in terms of the collision relative entropy
\item Applying weak monotonicity under dephasing (\Cref{lem:key-lemma})
\item Using the joint convexity property of the collision relative entropy (\Cref{lem:collision-convex})
\end{enumerate}

Step (a) is fairly standard and amounts to choosing a random subspace as the code space. We use ideas from~\cite{HSW08} to determine the distribution according to which we pick this random subspace. 

Steps (b) and (c) are borrowed from~\cite{BG13}, where the pretty good measurement was used to decode a classical message. There it was observed that the (average) probability of successful decoding can be written in terms of the collision relative entropy. Here for the entanglement transmission protocol, the pretty good measurement is replaced with the Petz recovery map. Note that in the case of classical-quantum channels the transpose channel reduces to the pretty good measurement; hence, we can regard the former as a generalization of the latter. It is easy to see that the average fidelity for entanglement 
transmission, obtained by the transpose channel method, can be written in terms of the collision relative entropy.

In~\cite{BG13}, in order to prove the achievability bound for the capacity of a classical-quantum channel, the final step was to use the joint convexity property of the collision relative entropy similar to step (e). However, this property by itself is not sufficient for obtaining the desired lower bound. To overcome this problem, in step (d) we prove a key result about a weak monotonicity property of the collision relative entropy under dephasing (\Cref{lem:key-lemma}).
This lemma should be considered as the main new ingredient of our method. After proving it in \Cref{sec:key-lemma} below, we then proceed with the proof of \Cref{thm:et-one-shot} in \Cref{sec:thm-4.1-proof}.

\subsection{Weak monotonicity under dephasing}\label{sec:key-lemma}

In the following, $\cH$ is a Hilbert space of dimension $d$ with the computational orthonormal basis $\{\ket 1, \dots, \ket d\}$. Later we will take $\cH$ to be $\cH_A$, the input space of the channel.  For a vector $\ket \varphi=c_1\ket 1+\cdots + c_d \ket d$ we define 
$$\ket{\varphi^*} \coloneqq c_1^*\ket 1+\cdots + c_d^*\ket d,$$
where $c_i^*$ is the complex conjugate of $c_i$. 

\begin{definition}[Dephasing map]\label{def:dephasing-map}
Let $U$ be a unitary operator acting on $\cH$. Then the vectors $|u_i\rangle\coloneqq  U|i\rangle$ for $i=1,\dots,d$ form an orthonormal basis for $\cH$. We define the \emph{dephasing map} $\cT_U$ associated with $U$ by
\begin{align*}
\cT_U(\rho) \coloneqq  \sum_{i=1}^d |u_i\rangle \langle u_i|\rho|u_i\rangle\langle u_i|.
\end{align*}
\end{definition}
For any unitary operator $U$, define the unitary operator $Z_U$ through the relation 
$$Z_U\ket{u_j} = e^{2\pi {i} j/d} \ket{u_j},$$
where $\lbrace |u_i\rangle\rbrace_{i=1}^d$ is the basis defined above with respect to $U$.
Moreover, for every $X\in \cB(\cH)$ let $\Gamma_X$ be the superoperator defined by 
\begin{align}\label{eq:def-Gamma}
\Gamma_X(\rho)\coloneqq X \rho X^{\dagger}.
\end{align}

The following characterizations of the dephasing map $\cT_U$ follow from straightforward calculations. In the sequel, we abbreviate $u_i \equiv |u_i\rangle\langle u_i|$ and $u_i^* \equiv |u_i^*\rangle\langle u_i^*|$. 

\begin{lemma}\label{lem:dephasing-map}~
\begin{enumerate}[{\normalfont (i)}]
\item
Let $|\Phi\rangle_{RA} = \frac{1}{\sqrt d}\sum_{i=1}^{d} |i\rangle\ox |i\rangle$ be a maximally entangled state. Then for every unitary operator $U$ we have
\begin{align*}
(\cT_U\ox\id_A)(\Phi) = \frac{1}{d}\sum_{i=1}^{d} u_i\ox u_i^*.
\end{align*}

\item For every unitary operator $U$ and $\rho\in\cD(\cH)$, we have
$$\mathcal T_U(\rho) = \frac{1}{d} \sum_{j=0}^{d-1} Z_U^j \rho Z_U^{-j} = \frac 1 d \sum_{j=0}^{d-1} \Gamma_{Z_U}^j(\rho).$$
\end{enumerate}
\end{lemma}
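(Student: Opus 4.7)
Both parts are computational identities that I would establish by direct calculation in a convenient basis. For part (i), the plan is to expand the maximally entangled state in the computational basis as a double sum, push the dephasing map through the tensor product, and collect terms. Specifically, applying $\cT_U\otimes\id_A$ produces, for each $k$, an outer factor $|u_k\rangle\langle u_k|$ on the $R$-system and partial sums of the form $\sum_i \langle u_k|i\rangle |i\rangle$ and $\sum_j \langle j|u_k\rangle\langle j|$ on the $A$-system. Using the conjugation convention $\langle u_k|i\rangle = \overline{\langle i|u_k\rangle} = \langle i|u_k^*\rangle$ from the definition of $|u_k^*\rangle$ stated just before the lemma, these partial sums collapse to $|u_k^*\rangle$ and $\langle u_k^*|$ respectively, and we recover $\frac{1}{d}\sum_k u_k\otimes u_k^*$ after collecting the terms.

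For part (ii), the strategy is to diagonalize $Z_U$ in the $\{|u_k\rangle\}$-basis. By its defining relation, $Z_U = \sum_k e^{2\pi ik/d}|u_k\rangle\langle u_k|$, hence $Z_U^j = \sum_k e^{2\pi ijk/d}|u_k\rangle\langle u_k|$. Substituting into $\Gamma_{Z_U}^j(\rho) = Z_U^j\rho Z_U^{-j}$ and expanding $\rho$ in the $u$-basis shows that each matrix element $\langle u_k|\rho|u_l\rangle$ picks up a phase $e^{2\pi ij(k-l)/d}$. The plan is then to average over $j\in\{0,\dots,d-1\}$ and invoke the discrete-Fourier orthogonality $\frac{1}{d}\sum_{j=0}^{d-1} e^{2\pi ij(k-l)/d} = \delta_{kl}$, which kills every off-diagonal contribution and leaves precisely $\sum_k \langle u_k|\rho|u_k\rangle\, u_k = \cT_U(\rho)$. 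The second equality in the statement is then a cosmetic rewriting in the $\Gamma_{Z_U}$-notation defined in \eqref{eq:def-Gamma}.

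Neither step poses any real conceptual obstacle; both are bookkeeping. The only mildly subtle point lies in part (i): the complex conjugation defining $|u_k^*\rangle$ is taken with respect to the \emph{computational} basis, which is not the basis being dephased. It is exactly this mismatch that causes $u_k^*$ rather than $u_k$ to appear on the $A$-side of the output, and keeping this convention straight is the only thing that needs care in the calculation.
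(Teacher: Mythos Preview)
Your proposal is correct and matches the paper's treatment: the paper does not give a proof at all, stating only that the identities ``follow from straightforward calculations,'' which is exactly the direct expansion in the $\{|u_k\rangle\}$-basis you outline. Your remark about the role of the computational-basis conjugation in producing $u_k^*$ on the $A$-side is the one point worth flagging, and you have it right.
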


Now we are ready to state and prove the lemma about weak monotonicity under dephasing.
\begin{lemma}[Weak monotonicity under dephasing]
\label{lem:key-lemma}
Let $\Lambda\colon A\rightarrow B$ be a completely positive map and $|\Phi\rangle_{RA} = \frac{1}{\sqrt d}\sum_{i=1}^{d} |i\rangle\ox |i\rangle$ be a maximally entangled state. For a fixed unitary operator $U$, let $\sigma_{RB}$ be a state satisfying
$\Gamma_{Z_U}\ox\id_B(\sigma_{RB}) = \sigma_{RB}$.
Then we have
\begin{align*}
\exp D_2\left(\cT_U\otimes \Lambda(\Phi_{RA})\middle\| \sigma_{RB} \right) \geq \frac{1}{d} \exp D_2\left( \id_R\otimes \Lambda(\Phi_{RA})\middle\| \sigma_{RB}\right).
\end{align*}
\end{lemma}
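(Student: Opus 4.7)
The plan is to unfold both $D_2$ and $\cT_U$, exploit the commutation of $W_j \coloneqq Z_U^j\otimes I_B$ with $\sigma_{RB}^{\pm 1/2}$, collapse a double sum to a single sum, and then show that the diagonal term reproduces $\exp D_2(\id_R\otimes \Lambda(\Phi)\|\sigma)$ while every off-diagonal contribution is manifestly non-negative. Throughout I abbreviate $\rho \coloneqq \id_R\otimes \Lambda(\Phi_{RA})$, $\rho_U \coloneqq \cT_U \otimes \Lambda(\Phi_{RA})$, $\sigma \coloneqq \sigma_{RB}$, and $W_j \coloneqq Z_U^j \otimes I_B$; without loss of generality $\supp(\rho_U), \supp(\rho) \subseteq \supp \sigma$ (otherwise $D_2(\rho_U\|\sigma) = +\infty$ and the inequality is vacuous).

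By \Cref{lem:dephasing-map}(ii) applied on the $R$ factor, $\rho_U = \frac{1}{d}\sum_{j=0}^{d-1} W_j \rho W_j^\dagger$. Substituting this into $\exp D_2(\rho_U\|\sigma) = \tr[\sigma^{-1/2}\rho_U \sigma^{-1/2}\rho_U]$ gives
\begin{align*}
\exp D_2(\rho_U\|\sigma) = \frac{1}{d^2}\sum_{j,k=0}^{d-1}\tr\bigl[\sigma^{-1/2}W_j\rho W_j^\dagger\sigma^{-1/2}W_k\rho W_k^\dagger\bigr].
\end{align*}
The hypothesis $(\Gamma_{Z_U}\otimes \id_B)(\sigma)=\sigma$ is equivalent to $[W_1,\sigma]=0$, so every $W_j$ (and its adjoint) commutes with $\sigma^{\pm 1/2}$. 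Moving the $W_j$'s through $\sigma^{-1/2}$, using $W_j^\dagger W_k = W_{k-j}$ and $W_k^\dagger W_j = W_{k-j}^\dagger$, and applying cyclicity of the trace, each summand simplifies to $\tr[\sigma^{-1/2}\rho\sigma^{-1/2}W_{k-j}\rho W_{k-j}^\dagger]$. Grouping by $\ell \equiv k-j \pmod d$ (each $\ell$ occurring $d$ times) yields
\begin{align*}
\exp D_2(\rho_U\|\sigma) = \frac{1}{d}\sum_{\ell=0}^{d-1}\tr\bigl[\sigma^{-1/2}\rho\sigma^{-1/2}W_\ell \rho W_\ell^\dagger\bigr].
\end{align*}

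The $\ell = 0$ contribution equals $\exp D_2(\rho\|\sigma)$, which already produces the advertised factor $1/d$; it therefore suffices to prove that each $\ell\neq 0$ summand is non-negative. Commuting $W_\ell, W_\ell^\dagger$ past one $\sigma^{-1/2}$ factor once more and splitting $\sigma^{-1/2} = \sigma^{-1/4}\sigma^{-1/4}$, cyclicity rewrites the $\ell$-th term as
\begin{align*}
\tr\bigl[\bigl(\sigma^{-1/4} W_\ell^\dagger \rho W_\ell \sigma^{-1/4}\bigr)\,\bigl(\sigma^{-1/4}\rho \sigma^{-1/4}\bigr)\bigr].
\end{align*}
Both factors are positive semidefinite, being conjugations $X\mapsto \sigma^{-1/4} X \sigma^{-1/4}$ of the positive operators $W_\ell^\dagger \rho W_\ell$ and $\rho$; the Hilbert--Schmidt inner product of two positive semidefinite matrices is non-negative, so each such term is $\geq 0$. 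Summing over $\ell$ yields the claim.

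The essentially only obstacle is bookkeeping around $\sigma^{-1/2}$: one has to ensure the support conditions above hold (so that all factors are well-defined) and to track the fact that the commutation $[W_j,\sigma]=0$ lifts to $[W_j,\sigma^{\alpha}]=0$ for every real power $\alpha$ of $\sigma$ restricted to its support. Once this point is granted, every other step is a routine algebraic rearrangement, and the positivity argument in the final line does all of the real work.
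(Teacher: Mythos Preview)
Your proof is correct and follows essentially the same approach as the paper: both arguments expand $\cT_U$ via $\frac{1}{d}\sum_j \Gamma_{Z_U}^j$, use the commutation of $Z_U^j\otimes I_B$ with $\sigma^{\pm 1/2}$ to reduce to a single sum, and then observe that each term is a Hilbert--Schmidt pairing of two positive semidefinite operators, so the $\ell=0$ term alone already gives the factor $1/d$. The only cosmetic difference is that the paper expands one copy of $\rho_U$ at a time (invoking also the $Z_U$-invariance of the dephased state itself), whereas you expand both copies simultaneously and collapse the resulting double sum by the substitution $\ell=k-j$; the underlying positivity step is identical.
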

\begin{proof}
We use the following notation: Fix a unitary operator $U$, and for $0\leq j\leq d-1$ define
\begin{align*}
\rho^{(j)}_{RB} := \Gamma_{Z_U}^j\otimes \Lambda(\Phi_{RA}) \qquad \text{and}\qquad \rho_{RB} := \frac 1 d\sum_{j=0}^{d-1} \rho_{RB}^{(j)},
\end{align*}
By \Cref{lem:dephasing-map} we have $\rho_{RB} = \cT_U\otimes \Lambda(\Phi_{RA}),$ 
which implies that
\begin{align}\label{eq:rho-invariance}
\Gamma_{Z_U}\otimes \id_B( \rho_{RB}) =  \rho_{RB}.
\end{align}
We then compute:
\begin{align*}
\exp D_2 \left( \mathcal T_U\otimes \Lambda(\Phi) \| \sigma_{RB}\right) &=  \exp D_2\left( \rho_{RB} \|  \sigma_{RB}\right) \\
& = \tr\left[   \rho  \sigma^{-1/2}  \rho  \sigma^{-1/2}   \right]\\
& = \frac{1}{d} \sum_{j=0}^{d-1} \tr\left[ \Gamma_{Z_U}^j \otimes \Lambda (\Phi)\cdot  \sigma^{-1/2}  \rho  \sigma^{-1/2}   \right]\\
& = \frac{1}{d} \sum_{j=0}^{d-1} \tr\left[ \id_R \otimes \Lambda (\Phi)\cdot \Gamma_{Z_U}^{-j} \otimes\id_B \left( \sigma^{-1/2}  \rho  \sigma^{-1/2}\right)   \right]\\
& =  \tr\left[ \id_R \otimes \Lambda (\Phi)\cdot \left( \sigma^{-1/2}  \rho  \sigma^{-1/2}\right)   \right]\\
&= \frac{1}{d} \sum_{j=0}^{d-1}\tr\left[ \id_R \otimes \Lambda (\Phi)\cdot \left( \sigma^{-1/2}  \rho^{(j)}  \sigma^{-1/2}\right)   \right]\\
& \geq  \frac{1}{d} \tr\left[ \id_R \otimes \Lambda (\Phi)\cdot \left( \sigma^{-1/2}  \rho^{(0)}  \sigma^{-1/2}\right)   \right]\\
& = \frac{1}{d} \exp D_2\left(\id_R \otimes \Lambda (\Phi) \middle\|   \sigma  \right),
\end{align*}
where the fifth equality follows from \eqref{eq:rho-invariance} as well as the assumption that $\Gamma_{Z_U}\ox\id_B(\sigma) = \sigma$, and the inequality follows from the fact that both the operators $\id_R\otimes \Lambda(\Phi)$ and $ \sigma^{-1/2}  \rho^{(j)}  \sigma^{-1/2}$ (for all $j$) are positive semidefinite.
\end{proof}

\subsection{Proof of \texorpdfstring{\Cref{thm:et-one-shot}}{Theorem \ref{thm:et-one-shot}}}\label{sec:thm-4.1-proof}
Let $\rho_A$ be an input state of the channel $\cN\colon {A\rightarrow B}$. We assume without loss of generality that $\rho$ is full-rank, since otherwise we may restrict $\mathcal H_A$ to the support of $\rho$. In the following we use the notation $\tilde \rho = d \rho.$ Furthermore, we let 
$$\Psi^{\rho}_{RA} = (I_R\otimes \sqrt{\trho})\Phi_{RA}(I_R\otimes  \sqrt{\trho})$$ 
be a purification of $\rho_A$.

\paragraph{Code construction:} Let $m$ be a positive integer to be determined. Let $P$ be a rank-$m$ projection acting on $\mathcal H_A$. Later we will assume that $P$ is chosen randomly according to the Haar measure, but for now we assume that $P$ is fixed. 
Define 
$$S\coloneqq  \sqrt{\trho} P \sqrt{\trho},$$
and let $\Pi_S$ be the projection onto the support of $S$, i.e.,
$$\Pi_S=S^{-1/2} \sqtrho P\sqtrho S^{-1/2}.$$ 
We choose $\supp(S)$ as the code space. 
Since $\rho$ is full-rank, $S$ has rank $m$. As a result, the code space $\supp(S)$ is of dimension $m$.
 
With the above construction, for every $\ket\psi\in \supp(P)$ we have $\sqrt{\trho}\ket\psi \in \supp(S)$. Moreover, $S^{-1/2}\sqtrho\ket \psi \in \supp(S)$ where $S^{-1}$ is computed on its support. 

Let us fix some orthonormal basis $\{\ket{v_1}, \dots, \ket{v_m}\}$ of $\supp(P)$, and define 
$$\ket{w_i}\coloneqq S^{-1/2} \sqtrho \ket{v_i}.$$
Then for every $i$ we have $\ket{w_i}\in \supp(S)$. Furthermore, 
$$\sum_{i=1}^m \ket{w_i}\bra{w_i} = \sum_{i=1}^m S^{-1/2} \sqtrho \ket{v_i}\bra{v_i} \sqtrho S^{-1/2} = S^{-1/2} \sqtrho P \sqtrho S^{-1/2} =\Pi_S.$$
This implies that $\{\ket{w_1}, \dots, \ket{w_m}\}$ is an orthonormal basis of the code space. In particular, for every 
$\ket\psi\in \supp(P)$ we have
\begin{align}
\| S^{-1/2} \sqtrho \ket{\psi}  \|=1. \label{eq:normalization}
\end{align}

\paragraph{Decoder:}
For the decoder we choose the Petz recovery map (or transpose channel):
\begin{align}\label{eq:petz-recovery-map}
\mathcal D: = \Gamma_S^{1/2} \circ \cN^*\circ \Gamma_{\cN(S)}^{-1/2}.
\end{align}
Recall that $\Gamma_X$ is defined in \eqref{eq:def-Gamma} and $\cN^*\colon B\rightarrow A$ is the adjoint of $\cN$.
It is easy to verify that $\mathcal D\colon B\rightarrow A$ is a CPTP map, and hence a valid decoder. It satisfies the property $\mathcal D(\cN(S))=S$.

By the definition of the adjoint map, for every $\ket\phi\in \supp(S)$ we have 
\begin{align*}
F^2(\phi, \mathcal D\circ \cN(\phi))& = \tr\left(\phi \cdot \Gamma_S^{1/2} \circ \cN^*\circ \Gamma_{\cN(S)}^{-1/2} \circ N(\phi)\right)\\
& = \tr \left(\cN\circ\Gamma_S^{1/2}(\phi) \cdot \Gamma_{\cN(S)}^{-1/2}\circ \cN(\phi)\right).
\end{align*}
In particular, if we choose $\ket\phi=S^{-1/2} \sqtrho \ket \psi$ for some $\ket\psi\in \supp(P)$, then, it follows from \eqref{eq:normalization} that $\ket\phi$ is normalized, and 
\begin{align*}
F^2(\phi, \mathcal D\circ \cN(\phi))= \tr\left(\cN(\sqtrho \psi \sqtrho) \,\cdot\, \Gamma_{\cN(S)}^{-1/2} \circ \cN\circ \Gamma_S^{-1/2}(\sqtrho \psi \sqtrho)\right).
\end{align*}

\paragraph{Flip operators:} The flip operator (cf.~\Cref{def:flip-operator} and \Cref{lem:flip-operator}) of the code space is given by
\begin{align*}
F_S &=  \sum_{i,j=1}^m \ket{w_i}\bra{w_j}\otimes \ket{w_j}\bra{w_i} \\
& = \left(S^{-1/2}\sqtrho\otimes S^{-1/2} \sqtrho\right) \left(  \sumi_{i,j=1}^m \ket{v_i}\bra{v_j}\otimes \ket{v_j} \bra{v_i} \right) \left(\sqtrho S^{-1/2}\otimes \sqtrho S^{-1/2}\right)\\
& = \left(S^{-1/2}\sqtrho\otimes S^{-1/2} \sqtrho\right)  F_P \left(\sqtrho S^{-1/2}\otimes \sqtrho S^{-1/2}\right).
\end{align*}

\paragraph{Average fidelity of the code:}
We are now ready to compute the average fidelity of the code $\supp(S)$:
\begin{align*}
 \cFavg(\Pi_S)& \coloneqq \Favg (\cN; \supp(S)) \\
 & = \int_{\ket \phi\in \supp(S)} \dd \mu(\phi) F^2(\phi, \mathcal D\circ \cN(\phi)) \\
& = \int_{\ket \phi\in \supp(S)} \dd \mu(\phi) \tr \left(\cN\circ\Gamma_S^{1/2}(\phi) \cdot \Gamma_{\cN(S)}^{-1/2}\circ \cN(\phi)\right)\\
& = \int_{\ket \phi\in \supp(S)} \dd \mu(\phi) \tr \left( F_B \, \cdot \,\cN\circ\Gamma_S^{1/2}(\phi) \otimes \Gamma_{\cN(S)}^{-1/2}\circ \cN(\phi)\right),
\end{align*}
where $\mu(\cdot)$ denotes the Haar measure, and $F_B$ is the flip operator corresponding to $\mathcal H_B$. Using the definition of the adjoint map and \Cref{lem:flip-operator}, we continue:
\begin{align*}
\cFavg(\Pi_S) & =\int_{\ket \phi\in \supp(S)} \dd \mu(\phi) \tr \left(
\left(\Gamma_{S}^{1/2}\circ \cN^* \otimes \cN^{*}\circ \Gamma_{\cN(S)}^{-1/2}\right)(F_B) \cdot \phi\otimes \phi
\right)\\
& = \frac{1}{m(m+1)}  \tr \left(
\left(\Gamma_{S}^{1/2}\circ \cN^* \otimes \cN^{*}\circ \Gamma_{\cN(S)}^{-1/2}\right)(F_B) \cdot (\Pi_S\otimes \Pi_S + F_S)
\right)\\
& = \frac{1}{m(m+1)}  \tr \left(
\left(\Gamma_{S}^{1/2}\circ \cN^* \otimes \cN^{*}\circ \Gamma_{\cN(S)}^{-1/2}\right)(F_B) \cdot \left(\Gamma_{S^{-1/2}\sqtrho}\otimes \Gamma_{S^{-1/2}\sqtrho}\right)(P\otimes P + F_P)
\right)\\
& = \int_{\ket\psi\in \supp(P)} \dd \mu(\psi)  \tr \left(
\left(\Gamma_{S}^{1/2}\circ \cN^* \otimes \cN^{*}\circ \Gamma_{\cN(S)}^{-1/2}\right)(F_B) \right.\\
&\qquad\qquad \left. \cdot \left(\Gamma_{S^{-1/2}\sqtrho}\otimes \Gamma_{S^{-1/2}\sqtrho}\right)(\psi\otimes \psi) \right)\\
& = \int_{\ket\psi\in \supp(P)} \dd \mu(\psi)  \tr \left(
F_B \cdot \left(    \cN\circ \Gamma_S^{1/2} \circ\Gamma_{S^{-1/2}\sqtrho} \right)(\psi) \otimes  \left(           \Gamma_{\cN(S)}^{-1/2}      \circ\cN\circ \Gamma_{S^{-1/2}\sqtrho}\right)( \psi) \right)\\
& = \int_{\ket\psi\in \supp(P)} \dd \mu(\psi)  \tr \left(\left(    \cN\circ \Gamma_S^{1/2} \circ\Gamma_{S^{-1/2}\sqtrho} \right)(\psi) \,\cdot\,  \left(           \Gamma_{\cN(S)}^{-1/2}      \circ\cN\circ \Gamma_{S^{-1/2}\sqtrho}\right)( \psi) \right)\\
& = \int_{\ket\psi\in \supp(P)} \dd \mu(\psi)  \tr \left(\left(    \cN\circ \Gamma_{\sqtrho} \right)(\psi) \,\cdot\,  \left(           \Gamma_{\cN(S)}^{-1/2}      \circ\cN\circ \Gamma_{S^{-1/2}\sqtrho}\right)( \psi) \right),
\end{align*}
where in the last line we used $S^{1/2}S^{-1/2}\sqtrho\ket\psi = \Pi_S\sqtrho\ket\psi = \sqtrho\ket\psi$ for every $\ket\psi\in \supp(P)$.

\paragraph{Average fidelity in terms of collision relative entropy:} We now express the average fidelity in terms of the collision relative entropy.
\begin{align*}
&\tr \left(\left(    \cN\circ \Gamma_{\sqtrho} \right)(\psi) \,\cdot\,\left(  \Gamma_{\cN(S)}^{-1/2}      \circ\cN  \circ \Gamma_{S^{-1/2}\sqtrho}\right)( \psi) \right) \\
 &  \qqquad  =   \tr \left(\left(    \cN\circ \Gamma_{\sqtrho} \right)(\psi) \,\cdot\,  \left(           \Gamma_{\cN(S)}^{-1/2}      \circ\cN\circ \Gamma_{\sqtrho}\right)( \psi) \right) & \\
 & \qqquad   \qquad +  \tr \left(\left(    \cN\circ \Gamma_{\sqtrho} \right)(\psi) \,\cdot\,  \left(           \Gamma_{\cN(S)}^{-1/2}      \circ\cN \circ(\Gamma_{S^{-1/2}\sqtrho}  - \Gamma_{\sqtrho})\right)( \psi) \right)\\
  & \qqquad   =   \tr \left(\cN(\sqtrho\psi\sqtrho) \,\cdot\, \cN(S)^{-1/2}    \,\cdot\,  \cN(\sqtrho \psi\sqtrho)\,\cdot\, \cN(S)^{-1/2} \right) & \\
 & \qqquad   \qquad +  \tr \left(\left(  \Gamma_{\cN(S)}^{-1/2}\circ  \cN\circ \Gamma_{\sqtrho} \right)(\psi) \,\cdot\,  \cN\left(     S^{-1/2}\sqtrho \psi \sqtrho S^{-1/2}  -   \sqtrho \psi \sqtrho          \right) \right)\\
 & \qqquad   = \exp D_2\left( \cN(\sqtrho \psi \sqtrho ) \| \cN(S)  \right) \\
 & \qqquad   \qquad +  \tr \left(\left(  \Gamma_{\cN(S)}^{-1/2}\circ  \cN\circ \Gamma_{\sqtrho} \right)(\psi) \,\cdot\,  \cN\left(     S^{-1/2}\sqtrho \psi \sqtrho S^{-1/2}  -   \sqtrho \psi \sqtrho          \right) \right) \\
 & \qqquad   \geq  \exp D_2\left( \cN(\sqtrho \psi \sqtrho ) \| \cN(S)  \right) \\
 & \qqquad   \qquad -   \left\|\left(  \Gamma_{\cN(S)}^{-1/2}\circ  \cN\circ \Gamma_{\sqtrho} \right)(\psi) \,\cdot\,  \cN\left(     S^{-1/2}\sqtrho \psi \sqtrho S^{-1/2}  -   \sqtrho \psi \sqtrho          \right) \right\|_1
\end{align*}
Therefore, 
\begin{multline*}
\cFavg(\Pi_S)  \geq   \int_{\ket\psi\in \supp(P)} \dd \mu(\psi)  \exp D_2\left( \cN(\sqtrho \psi \sqtrho ) \| \cN(S)  \right) \\
  -   \int_{\ket\psi\in \supp(P)} \dd \mu(\psi)  \left\|\left(  \Gamma_{\cN(S)}^{-1/2}\circ  \cN\circ \Gamma_{\sqtrho} \right)(\psi) \,\cdot\,  \cN\left(     S^{-1/2}\sqtrho \psi \sqtrho S^{-1/2}  -   \sqtrho \psi \sqtrho          \right) \right\|_1.
\end{multline*}

\paragraph{Random code space:} We now assume that the projection $P$ (of rank $m$)
is chosen randomly with respect to the Haar measure. Then the code space $\supp(S)$ itself becomes random. Hence, we can bound the expectation of the average fidelity of our code by
\begin{align*}
\int_P \dd & \mu(P)    \cFavg(\Pi_S) \\
& \geq  \int_P \dd \mu(P) \int_{\ket\psi\in \supp(P)} \dd \mu(\psi)  \exp D_2\left( \cN(\sqtrho \psi \sqtrho ) \| \cN(S)  \right) \\
 & \qquad -   \int_P \dd \mu(P) \int_{\ket\psi} \dd \mu(\psi)  \left\|\left(  \Gamma_{\cN(S)}^{-1/2}\circ  \cN\circ \Gamma_{\sqtrho} \right)(\psi) \,\cdot\,  \cN\left(     S^{-1/2}\sqtrho \psi \sqtrho S^{-1/2}  -   \sqtrho \psi \sqtrho          \right) \right\|_1\\
 & \eqqcolon T_1 - T_2,\numberthis\label{eq:two-terms}
\end{align*}
where the integration in the third line is again over 
$|\psi\rangle\in\supp(P)$. 
We refer to the terms appearing in the second and third line as $T_1$ and $T_2$ respectively. To analyse the first term $T_1$, we use \Cref{lem:key-lemma}. 
Note that the term $T_2$ vanishes if $\rho$ is a maximally mixed state (and correspondingly $\ket{\Psi^\rho}$ is a maximally entangled state).

\paragraph{Analysis of the first term in \eqref{eq:two-terms}:} The first step in analysing $T_1$ in \eqref{eq:two-terms} is to use Fubini's theorem to change the order of integrals:
\begin{align*}
T_1   = \int_P \dd \mu(P) \int_{\ket\psi\in \supp(P)} \dd \mu(\psi)  \exp D_2\left( \cN(\sqtrho \psi \sqtrho ) \| \cN(\sqtrho P\sqtrho)  \right).
\end{align*}
In the above integral the state $\ket \psi\in \supp(P)$ is distributed according to the Haar measure. Furthermore, for any such state $\ket \psi$ there is a projection $P'$ of rank $m-1$ such that $P=\psi + P'$. Indeed, if we let 
$$\ket{\psi}^{\perp} \coloneqq\{\ket v:\, \bra \psi  v\rangle =0, \ket v\in \cH_A\},$$
be the orthogonal subspace of $\ket \psi$, then $P'$ is a projection onto some $(m-1)$-dimensional subspace of $\ket\psi^{\perp}.$ Moreover, this projection $P'$ is distributed, independent of $\ket \psi$, according to the Haar measure. Putting these together, we find that
\begin{align*}
T_1  = \int_{\psi} \dd\mu(\psi)\int_{P': \supp(P')\subseteq \ket\psi^{\perp}} \dd\mu(P')\exp D_2\left( \cN(\sqtrho \psi \sqtrho ) \| \cN(\sqtrho \psi\sqtrho) +\cN(\sqtrho P'\sqtrho)  \right).
\end{align*}

The next step is to use the joint convexity of the exponential of the collision relative entropy from \Cref{lem:collision-convex}. We obtain 
\begin{align*}
T_1  & \geq  \int_{\psi} \dd\mu(\psi)\exp D_2\left( \cN(\sqtrho \psi \sqtrho ) \middle\| \cN(\sqtrho \psi\sqtrho) +\int_{P': \supp(P')\subseteq \ket\psi^{\perp}} \dd\mu(P') \cN(\sqtrho P'\sqtrho)  \right)\\
& = \int_{\psi} \dd\mu(\psi)\exp D_2\left( \cN(\sqtrho \psi \sqtrho ) \middle\| \cN(\sqtrho \psi\sqtrho) + \frac{m-1}{d-1} \cN\left(\sqtrho (I-\psi)\sqtrho\right)  \right),
\end{align*}
where in the second line we use
\begin{align*}
\int_{P': \supp(P')\subseteq \ket\psi^{\perp}} \dd\mu(P') P' = \frac{m-1}{d-1}(I-\psi).
\end{align*}

Then, using $\tilde \rho = d\rho$ the above bound can be written as
\begin{align*}
T_1  & \geq \int_{\psi} \dd \mu(\psi) \exp D_2(\cN(\sqtrho \psi \sqtrho ) \| \alpha \cN(\sqtrho \psi \sqtrho ) + \beta \cN(\rho )   ),
\end{align*}
where 
\begin{align}\label{eq:alpha-beta}
\alpha=\frac{d-m}{d-1}\qquad\text{and}\qquad \beta= \frac{d(m-1)}{d-1}.
\end{align}
Note that $\alpha+\beta = m$. 

To choose the random (Haar distributed) vector $\ket \psi$, we may first choose a random (Haar distributed) unitary operator $U$ and then take $\ket \psi=\ket {u_i^*}$, for a fixed $i$, where $\ket{u_i}$ is given by \Cref{def:dephasing-map}. We then have
\begin{align*}
T_1  & \geq \int_{U} \dd\mu(U)  \frac{1}{d} \sum_{i=1}^d\exp D_2\left(\cN(\sqtrho u_i^* \sqtrho ) \middle\| \alpha \cN(\sqtrho u_i^* \sqtrho ) + \beta \cN(\rho )   \right)\\
& = \int_{U} \dd\mu(U)  \frac{1}{d} \sum_{i=1}^d\exp D_2\left(u_i\otimes \cN(\sqtrho u_i^* \sqtrho ) \middle\| \alpha u_i\otimes \cN(\sqtrho u_i^* \sqtrho ) + \beta u_i\otimes \cN(\rho )   \right)\\
& \geq \int_{U} \dd\mu(U)  \exp D_2\left(  \frac{1}{d} \sum_{i=1}^d u_i\otimes  \cN(\sqtrho u_i^* \sqtrho ) \middle\| \alpha \frac{1}{d} \sum_{i=1}^d u_i\otimes \cN(\sqtrho u_i^* \sqtrho ) + \beta \pi_R\otimes \cN(\rho )   \right),
\end{align*}
where the last inequality follows from the joint convexity of $\exp D_2(\cdot \| \cdot)$.

We now use \Cref{lem:dephasing-map} and \Cref{lem:key-lemma} to obtain:
\begin{align*}
T_1& \geq \int_U \dd \mu(U) \exp D_2\left(\mathcal T_U \otimes (\cN\circ \Gamma_{\sqtrho})(\Phi) \middle\| \alpha \mathcal T_U \otimes (\cN\circ \Gamma_{\sqtrho})(\Phi) + \beta \pi_R\otimes \cN(\rho)\right)\\
& \geq  \int_U \dd \mu(U) \frac 1 d \exp D_2\left(\id_R \otimes (\cN\circ \Gamma_{\sqtrho})(\Phi) \middle\| \alpha \mathcal T_U \otimes (\cN\circ \Gamma_{\sqtrho})(\Phi) + \beta \pi_R\otimes \cN(\rho)\right).
\end{align*}
Note that here we applied \Cref{lem:key-lemma} with the choice $\Lambda = \cN\circ\Gamma_{\sqtrho}$ and 
\begin{align*}
\sigma_{RB} = \alpha \cT_U\otimes (\cN\circ\Gamma_{\sqtrho})(\Phi) +\beta \pi_R\otimes \cN(\rho),
\end{align*}
which satisfies $\Gamma_{Z_U}\ox \id_B(\sigma_{RB})=\sigma_{RB}$.

Once again using the joint convexity of $\exp D_2(\cdot\| \cdot)$, we find that 
\begin{align*}
T_1 & \geq \frac 1 d \exp D_2\left(\id_R \otimes (\cN\circ \Gamma_{\sqtrho})(\Phi) \middle\| \alpha\int_U \dd\mu(U) \mathcal T_U \otimes (\cN\circ \Gamma_{\sqtrho})(\Phi) + \beta \pi_R\otimes \cN(\rho)\right).
\end{align*}
Using \Cref{lem:dephasing-map}, we compute 
\begin{align*}
\int_{U} \dd\mu(U)\, \mathcal T_U\otimes \id_A(\Phi) &= \int_{U} \dd\mu(U)\, \frac 1 d\sum_{i=1}^d u_i\otimes u_i^*\\
&  = \int_{\psi} \dd\mu(\psi)  \psi\otimes \psi^*\\
& = x \Phi + y \pi_R\otimes \pi_A,
\end{align*}
where 
\begin{align*}
x= \frac{1}{d+1}\qquad \text{and} \qquad y = \frac{d}{d+1}.
\end{align*}
Therefore,
\begin{align*}
T_1& \geq  \frac 1 d\exp D_2\left(\id_R \otimes (\cN\circ \Gamma_{\sqtrho})(\Phi) \middle\| (\alpha x)\id_R\otimes (\cN\circ \Gamma_{\sqtrho})(\Phi) + (\alpha y+\beta) \pi_R\otimes \cN(\rho)\right)\\
& =  \frac{1}{md}\exp D_2\left(\id_R \otimes (\cN\circ \Gamma_{\sqtrho})(\Phi) \middle\| \alpha'\id_R\otimes (\cN\circ \Gamma_{\sqtrho})(\Phi) + \beta' \pi_R\otimes \cN(\rho)\right)\\
& =  \frac{1}{md}\exp D_2\left(\id_R \otimes \cN(\Psi^{\rho}) \middle\| \alpha'\id_R\otimes \cN(\Psi^{\rho}) + \beta' \pi_R\otimes \cN(\rho)\right),
\end{align*}
where we used $\Psi^\rho = \id_R\otimes \Gamma_{\sqtrho}(\Phi)$ and 
\begin{align*}
\alpha' &= \frac{1}{m}\alpha x = \frac{(d-m)}{m(d-1)(d+1)},\\
\beta' &= \frac{1}{m}(\alpha y+ \beta) = \frac{(d-m)d}{m(d-1)(d+1)}+ \frac{d(m-1)}{m(d-1)}.
\end{align*}
Observe that $\alpha'+\beta' = 1$, which follows from \eqref{eq:alpha-beta}. 

Finally, we use \Cref{lem:collision-spectrum} to bound the collision relative entropy by the information spectrum relative entropy. 
For any $\delta_1\in (0, 1)$ we obtain
\begin{align*}
T_1\geq \frac{1}{md}(1-\delta_1)\left[\alpha' + \beta'\exp\left(-D_s^{\delta_1}(\id_R\otimes \cN(\Psi^\rho)\| \pi_R\otimes \cN(\rho))\right)\right]^{-1}.
\end{align*}
Let us fix $\eps_1\in (\delta_1, 1)$. Then $T_1\geq 1-\eps_1$ follows from assuming that 
\begin{align*}
\frac{1}{md}(1-\delta_1)\left[\alpha' + \beta'\exp\left(-D_s^{\delta_1}(\id_R\otimes \cN(\Psi^\rho)\| \pi_R\otimes \cN(\rho))\right)\right]^{-1}\geq 1-\eps_1.
\end{align*}
This inequality is equivalent to
\begin{align*}
\exp\left(-D_s^{\delta_1}(\id_R\otimes \cN(\Psi^{\rho})\| \pi_R\otimes \cN(\rho))\right) \leq \frac{1-\delta_1}{md \beta'(1-\eps_1)} -\frac{\alpha'}{\beta'}.
\end{align*}
Using the facts that $\beta'\leq 1$ and $\alpha'/\beta'\leq 1/(md)$, this inequality holds if 
\begin{align*}
\exp\left(-D_s^{\delta_1}(\id_R\otimes \cN(\Psi^{\rho})\| \pi_R\otimes \cN(\rho))\right) \leq \frac{1-\delta_1}{md(1-\eps_1)} -\frac{1}{md}= \frac{\eps_1-\delta_1}{md(1-\eps_1)},
\end{align*}
which is equivalent to 
\begin{align*}
\log m & \leq D_s^{\delta_1}(\id_R\otimes \cN(\Psi^\rho)\| \pi_R\otimes \cN(\rho)) -\log d + \log\frac{\eps_1-\delta_1}{1-\eps_1}\\
& =  D_s^{\delta_1}(\id_R\otimes \cN(\Psi^\rho)\| I_R\otimes \cN(\rho)) + \log\frac{\eps_1-\delta_1}{1-\eps_1}.
\end{align*}
In summary, we obtain $T_1\geq 1-\eps_1$ for $\eps_1>0$, provided that for some $0<\delta_1<\eps_1$ we have
\begin{align}\label{eq:Cg1}
\log m \leq D_s^{\delta_1}(\id_R\otimes \cN(\Psi^\rho)\| I_R\otimes \cN(\rho)) + \log\frac{\eps_1-\delta_1}{1-\eps_1}.
\end{align}

\paragraph{Analysis of the second term in \eqref{eq:two-terms}:} The main ideas in this part have already appeared in the analysis of the first term of \eqref{eq:two-terms} above, so we leave the details for \Cref{app:analysis-t2}.
There, assuming 
\begin{align*}
\log m \leq D_s^{\delta_2} (\Psi^\rho\| I_R\otimes \rho ) + \log\frac{\eps_2 - \delta_2}{1- \eps_2}
\end{align*}
for $0<\delta_2<\eps_2$, we show that 
\begin{align*}
T_2^2 & \leq 1+\tr(\rho^2)-  (1-\eps_2)\leq \tr(\rho^2) + \eps_2.
\end{align*}

\paragraph{The last step:} 
Putting the above bounds on $T_1$ and $T_2$ together, we conclude that
\begin{align*}
\int_P \dd  \mu(P)    \cFavg(\Pi_S) 
 & \geq T_1 - T_2
 \geq 1- \eps_1 - \sqrt{\tr(\rho^2) + \eps_2}
\end{align*}
holds if 
\begin{align*}
\log m \leq \min\left\lbrace   D_s^{\delta_1}(\id_R\otimes \cN(\Psi^\rho)\| I_R\otimes \cN(\rho)) + \log\frac{\eps_1-\delta_1}{1-\eps_1},  D_s^{\delta_2} (\Psi^\rho\| I_R\otimes \rho ) + \log\frac{\eps_2 - \delta_2}{1- \eps_2} \right\rbrace.
\end{align*}
Then, using \Cref{lem:ave-ent-fid} and writing the entanglement fidelity in terms of the average fidelity, we conclude that there exists a projection operator $P$ such that, for the code space $\supp(S)$, where $S= \sqrt{\trho} P \sqrt{\trho}$, we have
\begin{align*}
\Fent(\cN; \supp(S)) \geq 1-\eps,
\end{align*}
with $\eps$ defined by 
\begin{align*}
\eps = \left( \eps_1 + \sqrt{\tr(\rho^2)+\eps_2}\right)\left(1+\frac{1}{d}\right).
\end{align*}
This means that 
\begin{multline*}
 \Qet{1}{\eps}(\cN)\geq \log  \min \left\{  \left\lfloor \exp\left( D_s^{\delta_1}(\id_R\otimes \cN(\Psi^\rho)\|  I_R\otimes \cN(\rho)) + \log\frac{\eps_1-\delta_1}{1-\eps_1}\right)\right\rfloor, \right.\\
\left. \left\lfloor \exp\left(D_s^{\delta_2} (\Psi^\rho\| I_R\otimes \rho ) + \log\frac{\eps_2 - \delta_2}{1- \eps_2}\right)\right\rfloor \right\},
\end{multline*}
which gives the desired bound in~\Cref{thm:et-one-shot}. The fact that $ \Q{1}{\eps}(\cN)\geq  \Qet{1}{\eps}(\cN)$ is already stablished in~\eqref{eq:Q-Qet-compare}.

\subsection{Asymptotic expansion}\label{sec:et-asymptotic}

We now prove the main theorem of this section:

\begin{theorem}[Second order achievability bound]\label{thm:et-second-order}
Let $\cN\colon A \rightarrow B$ be a quantum channel and fix $\eps\in (0,1)$. Then
we have
\begin{align}\label{asym-et}
\Q{n}{\eps}(\cN)\geq \Qet{n}{\eps}(\cN) \geq n I_c(\cN) + \sqrt{nV_\eps(\cN)}\,\invP{\eps} + \cO(\log n),
\end{align}
where the coherent information $I_c(\cN)$ and the $\eps$-quantum dispersion $V_\eps(\cN)$ are given by \Cref{def:channel-coherent-information}.
\end{theorem}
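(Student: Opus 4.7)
The plan is to apply the one-shot bound of \Cref{thm:et-one-shot} to the $n$-fold product channel $\cN^{\ox n}$ with a suitably chosen product input, and then extract the second-order asymptotic expansion using the known expansion of the information spectrum relative entropy. I would fix a state $\rho_A \in \cS_c(\cN)$ that attains the extremum in the definition of $V_\eps(\cN)$ — the minimum when $\eps \in (0,1/2)$ and the maximum when $\eps \in (1/2,1)$. After restricting to $\supp \rho_A$ if needed, we may assume $\rho_A$ is full-rank on $\cH_A$ of dimension $d$. Let $|\Psi^\rho_{RA}\rangle$ purify $\rho_A$ and set $\omega_{RBE} = (I_R \ox \cU_\cN)|\Psi^\rho_{RA}\rangle$, so that $D(\omega_{RB} \| I_R \ox \omega_B) = I_c(\cN)$ and $V(\omega_{RB} \| I_R \ox \omega_B) = V_\eps(\cN)$ by the choice of $\rho_A$.

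Using $\Qet{n}{\eps}(\cN) = \Qet{1}{\eps}(\cN^{\ox n})$, the tensor-product purification $(\Psi^\rho_{RA})^{\ox n}$ of $\rho_A^{\ox n}$, and the additivity of $D$ and $V$ on tensor powers, \Cref{thm:et-one-shot} yields
\begin{align*}
\Qet{n}{\eps}(\cN) \geq \min\Bigl\{& D_s^{\delta_1}\bigl(\omega_{RB}^{\ox n} \bigm\| (I_R \ox \omega_B)^{\ox n}\bigr) + \log \tfrac{\eps_1 - \delta_1}{1-\eps_1}, \\
& D_s^{\delta_2}\bigl((\Psi^\rho_{RA})^{\ox n} \bigm\| (I_R \ox \rho_A)^{\ox n}\bigr) + \log \tfrac{\eps_2 - \delta_2}{1-\eps_2} \Bigr\},
\end{align*}
subject to $\eps = \bigl(\eps_1 + \sqrt{(\tr\rho_A^2)^n + \eps_2}\bigr)(1 + d^{-n})$. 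Because $\rho_A$ must be mixed whenever $I_c(\cN) > 0$, the quantity $(\tr \rho_A^2)^n$ decays exponentially. I would set $\eps_2 = 1/n$, which forces $\eps_1 = \eps - 1/\sqrt n + \cO(n^{-1})$, together with $\delta_1 = \eps_1 - 1/n$ and $\delta_2 = 1/(2n)$, so that both logarithmic correction terms are of order $\cO(\log n)$.

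Next, I would invoke the second-order expansion
\begin{align*}
D_s^\delta(\sigma^{\ox n} \| \tau^{\ox n}) = n\, D(\sigma\|\tau) + \sqrt{n\, V(\sigma\|\tau)}\, \Phi^{-1}(\delta) + \cO(\log n)
\end{align*}
(referenced in \Cref{prop:entropies-SOA}) on the first argument of the minimum, and apply \Cref{lem:phi-trick} twice — once to absorb the $1/n$ shift from $\delta_1$ into $\eps_1$, and once more from $\eps_1$ into $\eps$ — to reduce that argument to $n\, I_c(\cN) + \sqrt{n\, V_\eps(\cN)}\, \Phi^{-1}(\eps) + \cO(\log n)$, which is exactly the desired bound \eqref{asym-et}.

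The hard part will be to show that the second argument of the minimum does not become the binding constraint. A direct computation yields $D(\Psi^\rho_{RA} \| I_R \ox \rho_A) = H(\rho_A)$, and the general inequality $I_c(\cN,\rho_A) \leq H(\rho_A)$ ensures that the leading coefficient of the second argument is at least that of the first. Generically $H(\rho_A) > I_c(\cN,\rho_A)$ strictly, in which case the second argument exceeds the first by $\Omega(n)$, easily swallowing the potentially large $\cO(\sqrt{n\log n})$ correction $\sqrt{n\, V(\Psi^\rho \| I_R \ox \rho)}\, \Phi^{-1}(\delta_2)$ coming from the very small $\delta_2$. The borderline case $H(\rho_A) = I_c(\cN,\rho_A)$ forces the Stinespring complementary output $\omega_E$ to be pure, so that $\cN$ acts isometrically on $\supp \rho_A$; a direct comparison of the two information variances (or a slight perturbation of $\rho_A$) then confirms the bound still holds.
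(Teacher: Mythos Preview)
Your overall strategy matches the paper's: apply \Cref{thm:et-one-shot} to $\cN^{\ox n}$ with a product input $\rho^{\ox n}$, choose parameters so that the logarithmic corrections are $\cO(\log n)$, expand the first argument via \Cref{prop:entropies-SOA}, and argue that the second argument cannot be the binding one because its leading term is $nH(\rho)>nI_c(\cN,\rho)$ whenever the channel is not isometric on $\supp\rho$. The parameter choices differ slightly from the paper's ($\eps_1=\eps-3/\sqrt n$, $\delta_1=\eps_1-1/\sqrt n$, $\eps_2\geq 4/n$, $\delta_2=\eps_2-1/n$) but are in the same spirit.

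There is, however, a genuine gap in your treatment of the second argument. You implicitly invoke the expansion of \Cref{prop:entropies-SOA} with smoothing parameter $\delta_2=1/(2n)$, but that proposition is stated and proved for \emph{fixed} $\eps\in(0,1)$; the $\cO(\log n)$ remainder hides constants that blow up as $\eps\to 0$. Indeed, the usual Berry--Esseen route underlying \Cref{prop:entropies-SOA} only controls the spectrum CDF to within $\cO(1/\sqrt n)$, which is far coarser than the target level $1/(2n)$, so one cannot read off the quantile at that level. Likewise \Cref{lem:phi-trick} only handles shifts of order $1/\sqrt n$ around a fixed $\eps$ and does not let you pass from $\delta_2=1/(2n)$ to anything bounded away from zero. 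The paper confronts exactly this issue: rather than appealing to \Cref{prop:entropies-SOA} for the second argument, it proves separately (see \Cref{eq:56} and \Cref{app:proof-4-10}) that
\[
D_s^{\delta_2}\bigl((\Psi^\rho)^{\ox n}\bigm\| I_{R^n}\ox\rho^{\ox n}\bigr)\;\geq\; nH(\rho)-\cO(\sqrt n\,\log n),
\]
via a chain through $\Dmax^{\sqrt{1-2/n}}$ (\Cref{lem:D_s-D_H-D_max}), the duality $-\Hmin=\Hmax$ for the pure state $(\Psi^\rho)^{\ox n}$ (\Cref{lem:Hmin-Hmax}(i)), the comparison of smooth min- and max-entropies (\Cref{lem:Hmin-Hmax}(ii)), and finally the quantum AEP lower bound on $\Hmin^{1/n^2}$ (\Cref{lem:Hmin-first-order}). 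Your target inequality $nH(\rho)-\cO(\sqrt{n\log n})$ is correct, but you need an argument of this type (or a moderate/large-deviation bound) to justify it; \Cref{prop:entropies-SOA} alone does not.
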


\begin{proof}
In the following we assume that $(\id_R\otimes \cN)(\Psi^\rho)$ is not a pure state, since otherwise the channel $\cN$ is an isometry for which the problem of quantum information (or entanglement) transmission is trivial, and the claimed achievability bound is immediate. 
In other words, letting $\cU_\cN\colon \cH_A\rightarrow \cH_{BE}$ be a Stinespring isometry of $\cN$, we assume that in 
\begin{align*}
\omega_{RBE} = (\id_R\otimes \mathcal \cU_{\cN})(\Psi^\rho)
\end{align*}
 the environment $E$ is \emph{not} completely decoupled from $R$. This implies that $I(R;E)_{\omega}>0$, which is equivalent to 
\begin{align}\label{eq:lower-bound-H-rho}
H(B)_\omega-H(RB)_\omega < H(R)_\omega=H(\rho).
\end{align}

By \Cref{thm:et-one-shot} we have the following bound on $\Qet{n}{\eps}(\cN)$, for an arbitrary input state $\rho_n\in\cD(\cH_A^{\otimes n})$:
\begin{multline}\label{eq:et-n-blocklength-bound}
\Qet{n}{\eps}(\cN) \geq \min\left\lbrace D_s^{\delta_1}(\id_{R^n}\ox \cN(\Psi^{\rho_n})\|I_{R^n}\ox \cN(\rho_n)) + \log \frac{\eps_1-\delta_1}{1-\eps_1},\right.\\
\left.  D_s^{\delta_2}(\Psi^{\rho_n}\| I_{R^n}\ox \rho_n) + \log \frac{\eps_2-\delta_2}{1-\eps_2} \right\rbrace
\end{multline}
where $\eps_i,\delta_i>0$ for $i=1,2$ are chosen such that 
\begin{align*}
\eps = \left( \eps_1 + \sqrt{\tr(\rho^2)+\eps_2}\right)\left(1+\frac{1}{d}\right),
\end{align*}
with $d=\dim\cH_A$ and $0\leq \delta_i\leq \eps_i$. In the following, we restrict our consideration to input states of the form $\rho^{\otimes n}$ where $\rho\in\cD(\cH_A)$.

Fix $\eps > 0$, and for sufficiently large $n$ define $\eps_1>0$ by 
$$\eps_1 =\eps - \frac{3}{ \sqrt n}.$$
Furthermore, define
$$\eps_2 = \left(   \eps\left(1+ \frac{1}{d^n}\right)^{-1} - \eps_1   \right)^2      - \tr\left( (\rho^{\otimes n})^2  \right),$$
such that \eqref{eq:def-tilde-eps} holds.
Then for sufficiently large $n$, we have
\begin{align*}
\eps_2 & =  \left(   \frac{3}{\sqrt n} +\eps\left( \left(1+ \frac{1}{d^n}\right)^{-1} - 1\right)   \right)^2      - \tr\left( (\rho^{\otimes n})^2  \right)\\
& \geq  \left(  \frac{3}{ \sqrt n} -\frac{1}{1+d^n}\eps    \right)^2      - \tr\left( (\rho^{\otimes n})^2  \right) \\
&\geq \frac{9}{n} - \frac{3\eps}{\sqrt{n}(1+d^n)} - \tr\left( (\rho^{\otimes n})^2  \right)\\
&\geq \frac{4}{n}.
\end{align*}
Note that here we assume that $\rho$ is not pure (since otherwise there is nothing to prove), so that $\tr\big( (\rho^{\otimes n})^2  \big)$ tends to zero exponentially fast in $n$. 
Finally, let 
\begin{align}\label{eq:eps-delta}
\delta_1= \eps_1- \frac{1}{\sqrt n } = \eps -\frac{4}{\sqrt n}\qquad \text{and}\qquad \delta_2 = \eps_2 - \frac{1}{ n}\geq \frac{3}{{n}}.
\end{align}
Then by \Cref{prop:entropies-SOA} and \Cref{lem:phi-trick}, we obtain the following expansion for the first term in \eqref{eq:et-n-blocklength-bound}:
\begin{multline}
D_s^{\delta_1}(\id_{R^n}\otimes \cN^{\otimes n}\big((\Psi^\rho)^{\otimes n}\big)\|  I_{R^n}\otimes \cN^{\otimes n}(\rho^{\otimes n})) + \log\frac{\eps_1-\delta_1}{1-\eps_1}\\
 = n D(\id_R\otimes \cN(\Psi^\rho)\| I_R\otimes \cN(\rho)  ) + \sqrt{n V} \Phi^{-1}(\eps) + \Theta(\log n),\label{eq:coh-a-b}
\end{multline}
where $V= V(\id_R\otimes \cN(\Psi^\rho)\| I_R\otimes \cN(\rho)  )$.

In \Cref{app:proof-4-10} we show that 
\begin{align}
D_s^{\delta_2} \big((\Psi^\rho)^{\otimes n}\| I_{R^n}\otimes \rho^{\otimes n} \big) + \log\frac{\eps_2 - \delta_2}{1- \eps_2} \geq n H(\rho) - O(\log n\sqrt n).
\label{eq:56}
\end{align}
Observe that
$$D(\id_R\otimes \cN(\Psi^\rho)\| I_R\otimes \cN(\rho)  ) = H(B)_\omega - H(RB)_{\omega},$$
where $\omega_{RBE} = (\id_R\otimes \mathcal \cU_{\cN})(\Psi^\rho)$. Then by \eqref{eq:lower-bound-H-rho} we obtain
$$D(\id_R\otimes \cN(\Psi^\rho)\| I_R\otimes \cN(\rho)  ) = H(B)_\omega - H(RB)_{\omega} < H(R)_\omega = H(\rho).$$
Therefore, for sufficiently large $n$, the expression \eqref{eq:coh-a-b} is less than \eqref{eq:56}, and hence the lower bound in \eqref{eq:et-n-blocklength-bound} is given in terms of the first term. That is, for every $\eps >0$ and arbitrary $\rho\in\cD(\cH_A)$ we have
\begin{align}\label{eq:et-arbitrary-input-bound}
\Qet{n}{\eps}(\cN) \geq n D(\id_R\otimes \cN(\Psi^\rho)\| I_R\otimes \cN(\rho)  ) + \sqrt{n V(\id_R\otimes \cN(\Psi^\rho)\| I_R\otimes \cN(\rho)  )} \Phi^{-1}(\eps) + O(\log n).
\end{align}
Since \eqref{eq:et-arbitrary-input-bound} holds for any arbitrary input state $\rho\in\cD(\cH)$, we have
\begin{align*}
\Qet{n}{\eps}(\cN) &\geq \max_{\rho\in\cD(\cH)} \left\lbrace n D(\id_R\otimes \cN(\Psi^\rho)\| I_R\otimes \cN(\rho)  ) + \sqrt{n V(\id_R\otimes \cN(\Psi^\rho)\| I_R\otimes \cN(\rho)  )} \Phi^{-1}(\eps) \right\rbrace\\
&\qquad  + O(\log n).\\
&\geq \max_{\rho\in\cS_c(\cN)} \left\lbrace n D(\id_R\otimes \cN(\Psi^\rho)\| I_R\otimes \cN(\rho)  ) + \sqrt{n V(\id_R\otimes \cN(\Psi^\rho)\| I_R\otimes \cN(\rho)  )} \Phi^{-1}(\eps) \right\rbrace\\
&\qquad  + O(\log n),
\end{align*}
where $\cS_c(\cN)$ is the set of states defined in \eqref{eq:Pi_c}. We have
\begin{align*}
D(\id_R\otimes \cN(\Psi^\rho)\| I_R\otimes \cN(\rho) = I_c(\cN)
\end{align*}
for all states $\rho\in\cS_c(\cN)$, and noting that $\invP{\eps}<0$ for $\eps<1/2$ (resp.~$\invP{\eps}\geq 0$ for $\eps\geq 1/2$), we obtain
\begin{align*}
\Qet{n}{\eps}(\cN) \geq \begin{cases}
n I_c(\cN) + \sqrt{n \min_{\rho\in\cS_c(\cN)}V(\cN,\rho)}\,\invP{\eps} + \cO(\log n) & \text{if }\eps\in (0,1/2)\\
n I_c(\cN) + \sqrt{n \max_{\rho\in\cS_c(\cN)}V(\cN,\rho)}\,\invP{\eps} + \cO(\log n) & \text{if }\eps\in (1/2,1)
\end{cases} 
\end{align*}
where $V(\cN,\rho) = V(\id_R\otimes \cN(\Psi^\rho)\| I_R\otimes \cN(\rho)  )$. The proof is then completed by employing definition \eqref{v-eps} of $V_\eps(\cN)$.
\end{proof}

\section{Example: 50-50 erasure channel}\label{sec:example}

Let $\cN\colon A\rightarrow B$ be the 50-50 (symmetric) erasure channel, which has zero capacity \cite{BDS97} by the No-cloning theorem. In this section, we study the $\eps$-error $n$-blocklength capacities of $\cN$, based on ideas from~\cite{MW14}. 

Let $\eps\in (0, 1/2)$ and $m=\exp \Qet{n}{\eps}(\cN)$. Then there is a code $(m, \cH_M, \cD)$ with 
\begin{align}\label{eq:erasure-f-eps}
\Fent(\cN^{\otimes n}; \cH_M) =  F^2(\Phi_{R^nA^n}^m, (\id_{R^n}\otimes \cD\circ \cN^{\otimes n})(\Phi_{R^nA^n}^{m}))\geq 1-\eps.
\end{align}
Furthermore, let $\ket{\omega_{R^nB^{n}E^{n}}}$ be a purification of $(\id_{R^n}\otimes \cN^{\otimes n})(\Phi^m_{R^nA^n})$. Note that 
since $\cN$ is a symmetric channel, $\ket{\omega_{R^nB^nE^n}}$ can be chosen to be symmetric with respect to the exchange of the subsystems $B^n$ and $E^n$. 

It is easy to verify that for any state $\sigma_{A^n}$ we have 
\begin{align}\label{eq:fidelity-bound}
F^2(\Phi^m_{R^nA^n}, I_{R^n}\otimes \sigma_{A^n}) = \langle \Phi_{R^nA^n}^m|\one_{R^n}\ox \sigma_{A^n}|\Phi_{R^nA^n}^m\rangle \leq \frac{1}{m}.
\end{align}
On the other hand, by \eqref{eq:erasure-f-eps} we have $\Phi_{R^nA^n}^m\in \cB_{\sqrt \eps}(\theta_{R^nA^n})$ where $\theta_{R^nA^n}\coloneqq(\id_{R^n}\otimes \cD) (\omega_{R^nB^n}).$
Therefore, 
\begin{align*}
H_{\max}^{\sqrt \eps}(R^n|A^n)_\theta & =\min_{\bar{\theta}\in \cB_{\sqrt \eps}(\theta)} \max_{\sigma_{A^n}} \log F^2\left(\bar{\theta}_{R^nA^n}, I_{R^n}\otimes \sigma_{A^n}\right)\\
& \leq \max_{\sigma_{A^n}} \log F^2(\Phi^{m}_{R^nA^n}, I_{R^n}\otimes \sigma_{A^n})\\
& \leq -\log m,
\end{align*}
where we used \eqref{eq:fidelity-bound} in the last line.
We continue to bound:
\begin{align*}
\Qet{n}{\eps}(\cN) & = \log m\\
&\leq - H_{\max}^{\sqrt \eps}(R^n|A^n)_\theta\\
& \leq -H_{\max}^{\sqrt \eps}(R^n|B^n)_{\omega}\\
& = H_{\min}^{\sqrt \eps}(R^n|E^n)_{\omega}\\
& = H_{\min}^{\sqrt \eps}(R^n|B^n)_{\omega}\\
& \leq H_{\max}^{\sqrt \eps}(R^n|B^n)_{\omega} + \log\frac{1}{\cos^2(2\alpha)},\numberthis\label{eq:chain-3-6}
\end{align*}
where $\alpha$ is chosen such that $\sin\alpha=\sqrt \eps$. Here, in the third line we used the data processing inequality for the smooth max-entropy, \Cref{lem:Hmin-Hmax}(iii).  In the fourth line we used the duality relation for the smooth min- and max-entropies, \Cref{lem:Hmin-Hmax}(i). In the fifth line we used the symmetry of $\omega_{R^nB^nE^n}$, and in the last line we used \Cref{lem:Hmin-Hmax}(ii) as well as the fact that $\eps<1/2$. The latter implies that $\alpha$, defined through $\sin\alpha=\sqrt \eps$, satisfies $\alpha\in (0, \pi/4)$.

Comparing the third and last lines of \eqref{eq:chain-3-6}, we find that 
$$\Qet{n}{\eps}(\cN)\leq -H_{\max}^{\sqrt \eps}(R^n|B^n)_{\omega} \leq\log\frac{1}{\cos(2\alpha)}.$$
We conclude that we have a constant upper bound on $\Qet{n}{\eps}(\cN)$ for arbitrary $n$ and
$\eps<1/2$.

On the other hand, consider the case $\eps\in (1/2, 1)$. By \Cref{thm:et-second-order}, we have
\begin{align*}
\Qet{n}{\eps}(\cN) \geq n I_c(\cN) + \sqrt{nV_\eps(\cN)}\,\invP{\eps} + \cO(\log n)
\end{align*}
where $V_\eps(\cN) = \max_{\rho\in\cS_c(\cN)}V(\omega_{RB}\|I_R\ox\omega_B)$ for this range of $\eps$ (cf.~\Cref{def:channel-coherent-information}) and $\omega_{RB} = (\id_R\ox\cN)(\Phi_{RA}^m)$. A straightforward calculation verifies that $I_c(\cN, \rho)\leq 0$ for all $\rho$, and that $I_c(\cN)=I_c(\cN,\pi_A)=0$. On the other hand, by considering the maximally mixed state as the input state, we have $V_\eps(\cN)>0$ for $\eps>1/2$. It follows that
\begin{align*}
\Qet{n}{\eps}(\cN) \geq \sqrt{nV_\eps(\cN)}\,\invP{\eps} + \cO(\log n),
\end{align*}
and the right-hand side of the above inequality is positive for sufficiently large $n$ since $\invP{\eps}>0$ for $\eps>1/2$. 

To summarize, for $\eps<1/2$, the $\eps$-error $n$-blocklength capacity $\Qet{n}{\eps}(\cN)$ is at most a constant independent of $n$, whereas $\Qet{n}{\eps}(\cN)$ is positive and scales as $\sqrt n$ for $\eps>1/2$.


\section*{Acknowledgements}
\addcontentsline{toc}{section}{Acknowledgements}
We would like to thank Francesco Buscemi and Amin Gohari for useful discussions, and Mark Wilde for helpful feedback and pointing out an error in a previous version of the paper.

\appendix
\section{Properties of distance measures and entropic quantities}\label{sec:props-entropies}
In this appendix we collect useful properties of the distance measures and entropic quantities defined in \Cref{sec:entropies}.

In~\cite{HHH99} (see also~\cite{Nie02}) the following relationship between the average fidelity and the entanglement fidelity was proven:
\begin{lemma} {\normalfont~\cite{HHH99} } \label{lem:ave-ent-fid}
For any quantum operation $\Lambda$ acting on $\cB(\cH)$ with $d=\dim \cH$, the average fidelity and entanglement fidelity are related by
\begin{align*}
\Favg(\Lambda;\cH)  = \frac{d\, \Fent(\Lambda; \cH) + 1}{d + 1}.
\end{align*}
\end{lemma}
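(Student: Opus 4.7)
The plan is to prove the formula via the standard ``swap trick'' together with the Haar integral over two copies of a random pure state. Let $F_\cH$ denote the swap operator on $\cH\otimes\cH$, and recall the basic identity $\tr(XY)=\tr[F_\cH(X\otimes Y)]$ for $X,Y\in\cB(\cH)$. Applied to $X=\phi$ and $Y=\Lambda(\phi)$ this gives
\begin{align*}
\langle\phi|\Lambda(\phi)|\phi\rangle = \tr\bigl[F_\cH\,(\id\otimes\Lambda)(\phi\otimes\phi)\bigr],
\end{align*}
so integrating over the normalized Haar measure on pure states yields
\begin{align*}
\Favg(\Lambda;\cH)=\tr\!\left[F_\cH\,(\id\otimes\Lambda)\!\left(\int\diff\mu(\phi)\,\phi\otimes\phi\right)\right].
\end{align*}

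The second ingredient is the well-known identity
\begin{align*}
\int\diff\mu(\phi)\,\phi\otimes\phi=\frac{2\,\Pi_{\text{sym}}}{d(d+1)}=\frac{\one+F_\cH}{d(d+1)},
\end{align*}
which follows from the fact that the integrand is invariant under $U\otimes U$ for every unitary $U$, hence by Schur--Weyl duality must lie in the span of $\one$ and $F_\cH$, with coefficients pinned down by the normalization $\tr(\phi\otimes\phi)=1$ and the trivial trace of the swap. Inserting this into the formula for $\Favg$ splits the answer into two pieces:
\begin{align*}
\Favg(\Lambda;\cH)=\frac{1}{d(d+1)}\Bigl(\tr\bigl[F_\cH\,(\id\otimes\Lambda)(\one)\bigr]+\tr\bigl[F_\cH\,(\id\otimes\Lambda)(F_\cH)\bigr]\Bigr).
\end{align*}

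For the first term, $(\id\otimes\Lambda)(\one)=\one\otimes\Lambda(\one)$ and the swap-trace identity gives $\tr[F_\cH(\one\otimes\Lambda(\one))]=\tr\Lambda(\one)=d$ by trace preservation. For the second, expanding $F_\cH=\sum_{i,j}|i\rangle\langle j|\otimes|j\rangle\langle i|$ and applying $\id\otimes\Lambda$ term by term yields
\begin{align*}
\tr\bigl[F_\cH\,(\id\otimes\Lambda)(F_\cH)\bigr]=\sum_{i,j}\langle j|\Lambda(|j\rangle\langle i|)|i\rangle,
\end{align*}
and a parallel expansion of $\Fent(\Lambda;\cH)=\langle\Phi|(\id_R\otimes\Lambda)(\Phi)|\Phi\rangle$ in the computational basis shows that this sum equals $d^2\,\Fent(\Lambda;\cH)$. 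Combining the two pieces gives $\Favg(\Lambda;\cH)=(d+d^2\Fent)/(d(d+1))=(d\,\Fent+1)/(d+1)$, as claimed.

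No step is genuinely hard: the only conceptual move is the swap trick, which turns the quadratic-in-$\phi$ integrand into something linear in $\phi\otimes\phi$ and so accessible via the standard Haar-measure formula; everything afterwards is bookkeeping. The only mild subtlety is to state $\Fent(\Lambda;\cH)$ unambiguously for a channel acting on a single space (taking $\cH_M=\cH$ in \Cref{def:capacities}), so that the MES $|\Phi\rangle\in\cH_R\otimes\cH$ has full Schmidt rank $d$ and the identification $d^2\Fent=\sum_{i,j}\langle j|\Lambda(|j\rangle\langle i|)|i\rangle$ is valid.
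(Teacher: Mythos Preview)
Your proof is correct. The paper does not actually supply a proof of this lemma: it simply states the result with a citation to \cite{HHH99} and moves on. Your argument via the swap trick and the Haar integral $\int\diff\mu(\phi)\,\phi\otimes\phi=(\one+F_\cH)/(d(d+1))$ is the standard one, and in fact both ingredients are already recorded in the paper as parts (ii) and (iv) of \Cref{lem:flip-operator}, so your proof fits naturally into the paper's existing toolkit. The computations checking $\tr[F_\cH(\id\otimes\Lambda)(\one)]=d$ and $\tr[F_\cH(\id\otimes\Lambda)(F_\cH)]=d^2\Fent(\Lambda;\cH)$ are both correct.
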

The relative entropy and the quantum information variance satisfy the following duality relations:
\begin{lemma}\label{lem:duality}
Let $\ket{\psi_{ABC}}$ be a pure state with corresponding marginals $\rho_{AB},\rho_{AC},$ and $\rho_{BC}$, then
\begin{align*}
D(\rho_{AB}\|\one_A\ox\rho_{B}) = - D(\rho_{AC}\|\one_A\ox\rho_{C}) \qquad\text{and}\qquad V(\rho_{AB}\|\one_A\ox\rho_{B}) = V(\rho_{AC}\|\one_A\ox\rho_{C}).
\end{align*}
\end{lemma}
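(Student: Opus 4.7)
The plan is to prove both duality relations simultaneously by showing that the ``modular'' operators $L_{AB} \coloneqq \log\rho_{AB} - \one_A \otimes \log\rho_B$ and $L_{AC} \coloneqq \log\rho_{AC} - \one_A \otimes \log\rho_C$ are, up to a sign, identified when acting on $|\psi_{ABC}\rangle$. Concretely, $D(\rho_{AB}\|\one_A \otimes \rho_B) = \tr[\rho_{AB} L_{AB}]= \langle \psi|(L_{AB}\otimes \one_C)|\psi\rangle$, and by cyclicity $\tr[\rho_{AB} L_{AB}^2]=\langle\psi|(L_{AB}^2\otimes\one_C)|\psi\rangle=\|(L_{AB}\otimes\one_C)|\psi\rangle\|^2$, so all quantities appearing in $D$ and $V$ can be read off from the vector $(L_{AB}\otimes\one_C)|\psi\rangle$.

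The key step uses the Schmidt decomposition of $|\psi\rangle$ across the $AB{:}C$ cut, which gives $\rho_{AB}$ and $\rho_C$ the same non-zero spectrum, and in fact implies $(f(\rho_{AB}) \otimes \one_C)|\psi\rangle = (\one_{AB}\otimes f(\rho_C))|\psi\rangle$ for every function $f$ on the relevant supports. Applying this with $f=\log$, and the analogous identity across the $AC{:}B$ cut with $\rho_{AC}$ and $\rho_B$, one obtains
\begin{align*}
(L_{AB}\otimes\one_C)|\psi\rangle &= \bigl(\one_A\otimes\one_B\otimes\log\rho_C - \one_A\otimes\log\rho_B\otimes\one_C\bigr)|\psi\rangle,\\
(L_{AC}\otimes\one_B)|\psi\rangle &= \bigl(\one_A\otimes\log\rho_B\otimes\one_C - \one_A\otimes\one_B\otimes\log\rho_C\bigr)|\psi\rangle,
\end{align*}
so $(L_{AB}\otimes\one_C)|\psi\rangle = -(L_{AC}\otimes\one_B)|\psi\rangle$. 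Taking the inner product with $|\psi\rangle$ immediately yields the first identity $D(\rho_{AB}\|\one_A\otimes\rho_B) = -D(\rho_{AC}\|\one_A\otimes\rho_C)$ (equivalently, the standard $H(A|B) = -H(A|C)$ duality for pure states). Taking squared norms yields $\tr[\rho_{AB}L_{AB}^2] = \tr[\rho_{AC}L_{AC}^2]$. Since the squared $D$-terms also match, subtracting $D^2$ from both sides gives the variance duality $V(\rho_{AB}\|\one_A\otimes\rho_B) = V(\rho_{AC}\|\one_A\otimes\rho_C)$.

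There is no real conceptual obstacle here beyond the spectral identity $(f(\rho_{AB})\otimes\one_C)|\psi\rangle = (\one_{AB}\otimes f(\rho_C))|\psi\rangle$; the mildly technical point is to handle the case in which the marginals fail to have full support, where $\log$ is defined only on the support. This is harmless: all the operators $L_{AB},L_{AC}$ and the state $|\psi\rangle$ are supported in the appropriate subspaces, so one can either work on $\supp\rho_{AB}\otimes\supp\rho_C$ (and its AC:B counterpart) or replace $\rho$ by $\rho+\delta\one$ and let $\delta\to 0$. With this understood, the argument above is a few lines and proves both identities at once.
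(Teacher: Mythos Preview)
Your argument is correct and rests on the same Schmidt-decomposition identity the paper uses, namely $(\log\rho_{AB}\otimes\one_C)|\psi\rangle=(\one_{AB}\otimes\log\rho_C)|\psi\rangle$ and its $AC{:}B$ analogue. The organizational difference is that the paper expands the square $\tr[\rho_{AB}L_{AB}^2]$ into three terms and matches each with its $AC$ counterpart separately (using Schmidt for the diagonal terms $\tr[\rho_{AB}\log^2\rho_{AB}]=\tr[\rho_C\log^2\rho_C]$ etc., and for the cross term via the same vector identity you wrote), whereas you keep $L_{AB}$ intact and prove the single vector identity $(L_{AB}\otimes\one_C)|\psi\rangle=-(L_{AC}\otimes\one_B)|\psi\rangle$, from which both $D$ and $V$ (and in fact all higher moments $\tr[\rho_{AB}L_{AB}^k]=(-1)^k\tr[\rho_{AC}L_{AC}^k]$) follow at once by taking inner products and norms. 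Your packaging is a bit more economical and makes the extension to higher moments transparent; the underlying content is the same.
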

These relations have been used in~\cite{HayashiTomamichel14}. Here, we give a proof for the sake of completeness.
\begin{proof}
The relation $D(\rho_{AB}\| I_A\otimes \rho_B)=-D(\rho_{AC}\| I_A\otimes \rho_C) $ follows from a straightforward calculation. For the second equation, we only need to establish that 
\begin{align*}
	\tr \left[\rho_{AB}(\log \rho_{AB}- I_A\otimes \log \rho_B)^2\right] = \tr \left[\rho_{AC}(\log \rho_{AC} - I_A\otimes \log\rho_C)^2\right],
\end{align*}
or equivalently, that
\begin{align*}
\tr \left[\rho_{AB} \log^2 \rho_{AB}\right] &+ \tr \left[\rho_B\log^2\rho_B\right] - 2\tr \left[\rho_{AB}(\log \rho_{AB}) (I_A\otimes \log \rho_B)\right]\\
& = \tr \left[\rho_{AC} \log^2 \rho_{AC}\right] + \tr \left[\rho_C\log^2\rho_C\right] - 2\tr \left[\rho_{AC}(\log \rho_{AC}) (I_A\otimes \log \rho_C)\right].
\end{align*}
Considering the Schmidt decompositions of $\ket{\psi_{ABC}}$ along the cuts $AB/C$ and $AC/B$, one verifies that $\tr \left[\rho_{AB} \log^2 \rho_{AB}\right] = \tr \left[\rho_C\log^2\rho_C\right]$ and $\tr \left[\rho_B\log^2\rho_B\right]=\tr \left[\rho_{AC} \log^2 \rho_{AC}\right]$. It remains to be shown that 
$$\tr \left[\rho_{AB}(\log \rho_{AB}) (I_A\otimes \log \rho_B)\right] = \tr \left[\rho_{AC}(\log \rho_{AC}) (I_A\otimes \log \rho_C)\right],$$
which is equivalent to
$$\bra{ \psi_{ABC}}  (\log \rho_{AB}\otimes I_C) (I_{AC}\otimes \log \rho_B)\ket {\psi_{ABC}} = \bra{ \psi_{ABC}}  (I_{AB}\otimes \log \rho_{C}) (\log \rho_{AC}\otimes I_B)\ket {\psi_{ABC}}.$$
Once again using Schmidt decomposition, we find that $\log \rho_{AB}\otimes I_C\ket {\psi_{ABC}} = I_{AB}\otimes \log \rho_C\ket {\psi_{ABC}}$ and that $I_{AC}\otimes \log \rho_B\ket {\psi_{ABC}} =  \log \rho_{AC}\otimes I_B\ket {\psi_{ABC}}$, which concludes the proof.
\end{proof}

The next lemmas concern the smooth min- and max-entropies.
\begin{lemma}\label{lem:Hmin-Hmax}
The following properties hold:
\begin{enumerate}[{\normalfont (i)}]
\item {\normalfont \cite{KRS09}} Let $\rho_{ABC}$ be a pure state and $\eps\in(0,1)$, then
\begin{align*}
\Hmin^\eps(A|B)_\rho = -\Hmax^\eps(A|C)_\rho.
\end{align*}
\item {\normalfont \cite{Tom12}} Let $\rho_{AB}\in\cD(\cH_{AB})$ and $\alpha, \beta>0$ be such that $\alpha+\beta<\pi/2$. Then we have
\begin{align*}
\Hmin^{\sin \alpha}(A|B)_\rho \leq \Hmax^{\sin\beta}(A|B)_\rho + \log\frac{1}{\cos^2(\alpha+\beta)}.
\end{align*}
In particular, if $\eps,\eps'>0$ such that $\eps+\eps'<1$, then 
\begin{align*}
\Hmin^\eps(A|B)_\rho \leq \Hmax^{\eps'}(A|B)_\rho + \log\frac{1}{1-(\eps+\eps')^2}.
\end{align*}
\item {\normalfont \cite{TCR10}} Data-processing inequality for max-entropy: Let $\rho_{AB}\in\cD(\cH_{AB})$, $\eps\in (0,1)$, and $\Lambda\colon B\rightarrow D$ be a CPTP map with $\tau_{AD}\coloneqq (\id_A\ox\Lambda)(\rho_{AB})$, then	
\begin{align*}
\Hmax^\eps(A|B)_\rho \leq \Hmax^\eps(A|D)_\tau.
\end{align*}
\end{enumerate}
\end{lemma}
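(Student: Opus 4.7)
The three items in this lemma are all well-established results in the theory of smooth entropies, so my plan is essentially to cite the indicated references rather than reprove them in detail. Nonetheless, it is worth sketching the core ideas behind each part, since they fit into a common framework.

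Part (i) is the smooth version of the duality $\Hmin(A|B)_\rho = -\Hmax(A|C)_\rho$ for a tripartite pure state $\rho_{ABC}$. The non-smooth identity can be derived from the semidefinite programming characterization of $\Hmin$: the primal problem $\Hmin(A|B)_\rho = -\log\min\{\tr\sigma_B : \one_A\otimes\sigma_B\geq\rho_{AB}\}$ has a dual whose optimum coincides with $\max_{\tau_C}F^2(\rho_{AC},\one_A\otimes\tau_C) = 2^{\Hmax(A|C)_\rho}$. To lift the identity to smoothed quantities one invokes Uhlmann's theorem: any sub-normalized $\brho_{AB}\in\Be{\eps}(\rho_{AB})$ admits a purification $\brho_{ABC}$ whose fidelity to $\rho_{ABC}$ equals $F(\brho_{AB},\rho_{AB})$, so its $C$-marginal $\brho_{AC}$ automatically lies in $\Be{\eps}(\rho_{AC})$. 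The minimization defining $\Hmin^\eps$ on the $B$-side is then put into bijection with the minimization defining $\Hmax^\eps$ on the $C$-side.

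Part (ii) is due to Tomamichel. The trigonometric parametrization is natural because the purified distance $P(\rho,\sigma)=\sqrt{1-F^2(\rho,\sigma)}$ satisfies a triangle inequality and $\sin\alpha$ is precisely the purified-distance radius corresponding to $\Be{\sin\alpha}$. I would pick near-optimizers $\brho_{AB}\in\Be{\sin\alpha}(\rho_{AB})$ for $\Hmin^{\sin\alpha}(A|B)_\rho$ and $\widetilde{\rho}_{AB}\in\Be{\sin\beta}(\rho_{AB})$ for $\Hmax^{\sin\beta}(A|B)_\rho$, bound the mutual purified distance by $P(\brho_{AB},\widetilde{\rho}_{AB})\leq\sin(\alpha+\beta)$ via the triangle inequality, and then apply the non-smooth comparison $\Hmin(A|B)_{\brho}\leq\Hmax(A|B)_{\widetilde{\rho}}+\log\bigl(1/F^2(\brho_{AB},\widetilde{\rho}_{AB})\bigr)$, which follows by chaining the variational formulas for the two quantities. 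The specialization stated in terms of $\eps,\eps'$ is just the substitution $\alpha=\arcsin\eps$, $\beta=\arcsin\eps'$ together with $\cos^2(\arcsin\eps+\arcsin\eps')\geq 1-(\eps+\eps')^2$.

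Part (iii) is essentially immediate from the variational formula $\Hmax(A|B)_\rho = \max_{\sigma_B}\log F^2(\rho_{AB},\one_A\otimes\sigma_B)$. For a CPTP map $\Lambda\colon B\to D$ and any feasible $\sigma_B$, the push-forward $\Lambda(\sigma_B)\in\cD(\cH_D)$ is a candidate in the optimization defining $\Hmax(A|D)_\tau$, and the fidelity can only increase under the CPTP map $\id_A\otimes\Lambda$. Smoothing preserves this argument because $\id_A\otimes\Lambda$ sends every $\brho_{AB}\in\Be{\eps}(\rho_{AB})$ into some $\bar\tau_{AD}\in\Be{\eps}(\tau_{AD})$ by monotonicity of fidelity, so the outer minimum over smoothings behaves monotonically as well. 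The main obstacle across all three parts is really just bookkeeping: the smoothing convention (normalized versus sub-normalized states, trace distance versus purified distance) varies between the referenced sources, and the constants in (ii) in particular depend sensitively on which convention one adopts.
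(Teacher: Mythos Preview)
Your proposal is correct and in fact goes beyond what the paper does: in the paper this lemma is stated with citations only and given no proof whatsoever, so your approach of citing the indicated references while sketching the underlying arguments is entirely consistent with (and more informative than) the paper's treatment. Your sketches for all three parts are accurate in spirit; the only caveat is that the ``non-smooth comparison'' you invoke in part (ii) is stated a bit loosely (it is really a bound relating min- and max-entropies of two \emph{different} nearby states, not a single identity), but this is exactly what Tomamichel proves and your outline captures the right mechanism.
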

\begin{lemma}\label{lem:Hmin-first-order}
{\normalfont \cite{TCR09, Tom12}}
Let $\rho\in\cD(\cH_{AB})$ and $\eps\in (0,1)$, then
\begin{align*}
\Hmin^\eps(A^n|B^n)_{\rho^\n} \geq n H(A|B)_\rho - \sqrt{n}\cO\left(\sqrt{g(\eps)}\right)
\end{align*}
where $g(t) = - \log(1-\sqrt{1-t^2})$.
\end{lemma}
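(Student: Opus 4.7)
The plan is to prove this via a quantum asymptotic equipartition property (AEP) argument in the style of Tomamichel--Colbeck--Renner. The core idea is to sandwich the smooth conditional min-entropy between a Rényi conditional entropy (which is additive on i.i.d.\ states) and the von Neumann conditional entropy (recovered by taking the Rényi parameter $\alpha \to 1$), then balance two error terms by a judicious choice of $\alpha$ depending on $n$ and $\eps$.

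First I would establish a one-shot bound of the form
\begin{align*}
\Hmin^\eps(A|B)_\sigma \geq H_\alpha^\uparrow(A|B)_\sigma - \frac{g(\eps)}{\alpha - 1}
\end{align*}
valid for all $\alpha \in (1,2]$ and $\sigma \in \cD(\cH_{AB})$, where $H_\alpha^\uparrow$ denotes the sandwiched Rényi conditional entropy defined via $H_\alpha^\uparrow(A|B)_\sigma \coloneqq -\inf_{\tau_B}D_\alpha(\sigma_{AB}\|\one_A \ox \tau_B)$ with $D_\alpha$ the sandwiched Rényi divergence. This bound follows by picking, as the smoothing state, the projection of $\sigma_{AB}$ onto the subspace where $\sigma_{AB} \leq 2^\lambda \one_A \ox \tau_B$ for the optimal $\tau_B$ in $H_\alpha^\uparrow$, using Markov's inequality controlled by a Rényi-type moment and the Fuchs--van de Graaf relation to convert trace distance to fidelity (which is where the function $g(\eps) = -\log(1-\sqrt{1-\eps^2})$ enters).

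Next I would invoke two standard facts: (i) additivity on product states, $H_\alpha^\uparrow(A^n|B^n)_{\rho^{\otimes n}} = n H_\alpha^\uparrow(A|B)_\rho$, which is a property of the sandwiched Rényi conditional entropy; and (ii) a Taylor expansion about $\alpha = 1$: since $\alpha \mapsto H_\alpha^\uparrow(A|B)_\rho$ is smooth with $H_1^\uparrow = H(A|B)_\rho$, there exists a constant $K(\rho)$ such that $H_\alpha^\uparrow(A|B)_\rho \geq H(A|B)_\rho - K(\rho) (\alpha-1)$ on a neighborhood of $1$. Applying the one-shot bound to $\rho^{\otimes n}$ and combining with these two facts yields
\begin{align*}
\Hmin^\eps(A^n|B^n)_{\rho^{\otimes n}} \geq n H(A|B)_\rho - n K(\rho)(\alpha-1) - \frac{g(\eps)}{\alpha - 1}.
\end{align*}
Optimizing over $\alpha$ by choosing $\alpha - 1 = \sqrt{g(\eps)/(n K(\rho))}$ balances the two error contributions and produces the claimed lower bound $n H(A|B)_\rho - \cO(\sqrt{n\, g(\eps)})$.

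The main obstacle is the first step: getting the correct dependence on $\eps$ in the smoothing bound, specifically obtaining $g(\eps)$ and not a looser function. The standard route requires the Araki--Lieb--Thirring inequality to control Rényi moments after the smoothing projection, together with careful bookkeeping of the normalization of the smoothed state in $\cDs$. Once this sharp one-shot bound is in place, the remaining steps are routine optimization using that $K(\rho)$ is a finite $\rho$-dependent constant (it is essentially the conditional information variance of $\rho_{AB}$).
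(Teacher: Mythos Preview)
The paper does not supply its own proof of this lemma; it is quoted as a known result with citations to \cite{TCR09,Tom12}. Your proposal accurately reconstructs the quantum AEP argument from those references: a one-shot smoothing bound $\Hmin^\eps \geq H_\alpha - g(\eps)/(\alpha-1)$, additivity of the R\'enyi conditional entropy on i.i.d.\ states, a first-order Taylor expansion at $\alpha=1$, and optimization of $\alpha$ to balance the two error terms. One minor anachronism: the original \cite{TCR09} and \cite{Tom12} predate the sandwiched R\'enyi divergence (introduced in 2013), so the R\'enyi quantity used there is a different variant; nevertheless the structure of the argument is exactly as you describe, and the sandwiched version works equally well for this purpose.
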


We use the following convexity property of the collision relative entropy.
\begin{lemma}\label{lem:collision-convex}
{\normalfont \cite{MDS+13, WWY14}} The function $(\rho, \sigma)\mapsto \exp D_2(\rho\|\sigma)$ is jointly convex. 
\end{lemma}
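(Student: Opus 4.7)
The approach is to realize $\exp D_2(\rho\|\sigma)$ as a pointwise supremum of a family of jointly convex functions of $(\rho,\sigma)$; since the pointwise supremum of jointly convex functions is jointly convex, this immediately gives the claim.

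\textbf{Step 1 (variational formula).} Starting from $\exp D_2(\rho\|\sigma)=\tr(\sigma^{-1/2}\rho\sigma^{-1/2}\rho)$, I first derive the identity
\[
\exp D_2(\rho\|\sigma)\;=\;\sup_{X=X^{\dagger}}\Bigl\{\,2\tr(X\rho)\;-\;\tr\bigl(X\sigma^{1/2}X\sigma^{1/2}\bigr)\,\Bigr\},
\]
where the supremum runs over Hermitian $X$ supported on $\supp\sigma$. This is a completing-the-squares calculation: the quadratic in $X$ is strictly concave when $\sigma>0$ since $X\mapsto\tr(X\sigma^{1/2}X\sigma^{1/2})=\tr\bigl((\sigma^{1/4}X\sigma^{1/4})^{2}\bigr)$ is a positive-definite quadratic form on Hermitian $X$. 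Hence the stationarity condition $\sigma^{1/2}X\sigma^{1/2}=\rho$ pins down the unique global maximizer $X^{\ast}=\sigma^{-1/2}\rho\sigma^{-1/2}$, and substituting back recovers the claimed value. The standard support hypothesis $\supp\rho\subseteq\supp\sigma$ (needed for $D_2(\rho\|\sigma)$ to be finite) ensures $X^{\ast}$ is well defined on $\supp\sigma$.

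\textbf{Step 2 (joint convexity of each slice).} For each fixed Hermitian $X$, define $f_X(\rho,\sigma):=2\tr(X\rho)-\tr\bigl(X\sigma^{1/2}X\sigma^{1/2}\bigr)$. The first summand is affine in $(\rho,\sigma)$, hence jointly convex. For the second summand I invoke Lieb's concavity theorem, which guarantees that $(A,B)\mapsto \tr\bigl(X A^{1/2} X B^{1/2}\bigr)$ is jointly concave on pairs of positive operators (this is the $p=q=\tfrac12$ case of the jointly concave map $(A,B)\mapsto \tr(K^{\dagger}A^{p}K B^{q})$). The restriction of a jointly concave function to the diagonal $A=B=\sigma$ is still concave, so $\sigma\mapsto \tr(X\sigma^{1/2}X\sigma^{1/2})$ is concave; with the minus sign it becomes convex, and combining with the affine term shows that $f_X$ is jointly convex in $(\rho,\sigma)$.

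\textbf{Step 3 and main obstacle.} Finally, the pointwise supremum $\sup_X f_X$ over the family of jointly convex functions $\{f_X\}$ is jointly convex, and by Step~1 this supremum equals $\exp D_2(\rho\|\sigma)$, which proves the lemma. The only substantive ingredient is Lieb's concavity theorem in Step~2; every other piece reduces to a short calculation or to the elementary fact about suprema. An alternative would be to quote the Frank--Lieb joint convexity theorem for the sandwiched R\'enyi divergence at $\alpha=2$ directly, but the variational route above keeps the argument self-contained modulo one classical matrix-analysis input.
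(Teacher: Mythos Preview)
Your argument is correct. The variational identity in Step~1 is verified by the completing-the-square computation you describe, and the concavity of $\sigma\mapsto\tr(X\sigma^{1/2}X\sigma^{1/2})$ is indeed the $p=q=\tfrac12$ case of Lieb's concavity theorem restricted to the diagonal. The supremum-of-convex argument in Step~3 is standard. One small remark: you restrict the supremum to Hermitian $X$ supported on $\supp\sigma$, which at first sight makes the index set depend on $\sigma$ and would obstruct the sup-of-convex step. In fact this restriction is unnecessary: decomposing $X$ along $\supp\sigma\oplus(\supp\sigma)^\perp$ and using $\supp\rho\subseteq\supp\sigma$ shows that only the block of $X$ on $\supp\sigma$ contributes to $f_X(\rho,\sigma)$, so the supremum over \emph{all} Hermitian $X$ already gives the same value. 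With that adjustment the index set is fixed and Step~3 goes through cleanly.

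As for comparison with the paper: the paper does not supply a proof of this lemma at all---it is simply stated with citations to \cite{MDS+13,WWY14}, where the joint convexity of the sandwiched R\'enyi quantities $\tilde Q_\alpha$ for $\alpha\ge 1$ is established (your $\exp D_2$ is the case $\alpha=2$). Your route via the variational formula and Lieb concavity is essentially the Frank--Lieb argument you mention at the end; it is one of the standard proofs and is entirely in the spirit of the cited references.
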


The following inequality between the collision relative entropy and the information spectrum relative entropy is used in our proof of the main result.

\begin{lemma}\label{lem:collision-spectrum}
{\normalfont \cite{BG13}} 
For every $0< \eps, \lambda <1$, $\rho\in\cD(\cH)$ and $\sigma\in\cP(\cH)$ we have
\begin{align*}\exp D_2&(\rho\| \lambda \rho  + (1-\lambda)\sigma) \geq
 (1-\eps)\left[\lambda + (1-\lambda) \exp\left(-D_s^{\eps}(\rho \| \sigma) \right)\right]^{-1}.
\end{align*}
\end{lemma}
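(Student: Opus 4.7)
The plan is to reduce the inequality to a single application of the Cauchy--Schwarz inequality, using a spectral projection coming from the definition of $D_s^\eps(\rho\|\sigma)$ to control both sides simultaneously. I would fix $\gamma < D_s^\eps(\rho\|\sigma)$ and set $P \coloneqq I - \{\rho \leq 2^\gamma \sigma\}$, the spectral projection onto the positive part of $\rho - 2^\gamma\sigma$. Two properties follow immediately: by the supremum characterization of the information spectrum relative entropy, $\tr(\rho P) \geq 1 - \eps$; and since $P$ commutes with $\rho - 2^\gamma\sigma$, the operator inequality $P(\rho - 2^\gamma\sigma)P \geq 0$ gives the scalar bound $\tr(\sigma P) \leq 2^{-\gamma}\tr(\rho P)$.

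Next, writing $\tau \coloneqq \lambda\rho + (1-\lambda)\sigma$ (invertibility may be arranged by restricting to $\supp\sigma$ without loss), I would apply the Hilbert--Schmidt Cauchy--Schwarz inequality $|\tr(AB)|^2 \leq \tr(A^\dagger A)\tr(B^\dagger B)$ to the specific pair $A = \tau^{-1/4}\rho\tau^{-1/4}$ and $B = \tau^{1/4}P\tau^{1/4}$. Cyclicity collapses the middle factors $\tau^{-1/4}\tau^{1/4}$ and $\tau^{1/4}\tau^{-1/4}$ to identities, yielding the three clean expressions
$$\tr(AB) = \tr(\rho P), \quad \tr(A^\dagger A) = \exp D_2(\rho\|\tau), \quad \tr(B^\dagger B) = \tr(\tau^{1/2}P\tau^{1/2}P).$$
Since $\tau^{1/2}P\tau^{1/2} \geq 0$ and $P \leq I$, the last trace is at most $\tr(\tau P) = \lambda\tr(\rho P) + (1-\lambda)\tr(\sigma P) \leq \tr(\rho P)\bigl(\lambda + (1-\lambda)2^{-\gamma}\bigr)$ by the setup step. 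Substituting this into Cauchy--Schwarz and cancelling one factor of $\tr(\rho P) \geq 1-\eps$ then gives
$$\exp D_2(\rho\|\tau) \geq \frac{1-\eps}{\lambda + (1-\lambda)2^{-\gamma}},$$
and letting $\gamma \uparrow D_s^\eps(\rho\|\sigma)$ closes the argument under the paper's convention $\exp(\cdot) = 2^{(\cdot)}$.

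The hard part will be identifying the right pair $(A,B)$: the $\tau^{\pm 1/4}$ bracketing is tuned precisely so that $\tr(A^\dagger A)$ reproduces $\exp D_2(\rho\|\tau)$, so that $\tr(B^\dagger B)$ is dominated by $\tr(\tau P)$, and so that the cross term $\tr(AB)$ collapses all the way down to the ``success probability'' $\tr(\rho P)$. Once this choice is in hand, a single spectral projection $P$ delivers both the $1-\eps$ factor (from the trace condition) and the $2^{-\gamma}$-weighted denominator (from the operator inequality) at once, and the remainder of the proof is routine linear algebra.
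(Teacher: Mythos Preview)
Your argument is correct. The paper does not supply its own proof of this lemma, merely citing \cite{BG13}; your Cauchy--Schwarz approach with $A=\tau^{-1/4}\rho\,\tau^{-1/4}$ and $B=\tau^{1/4}P\,\tau^{1/4}$, together with the spectral projector $P$ onto the positive part of $\rho-2^\gamma\sigma$, is precisely the direct proof given in that reference.
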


The information spectrum entropy and the smooth max-relative entropy (cf.~\Cref{def:max-relative-entropy}) can be bounded by each other:
\begin{lemma}\label{lem:D_s-D_H-D_max}
{\normalfont \cite{TH13}} Let $\rho,\sigma\in\cD(\cH)$, $\eps\in (0,1)$, and $\eta>0$. Furthermore, let $\nu(\sigma)$ denote the number of different eigenvalues of $\sigma$. Then we have
\begin{align*}
\Dmax^{\sqrt{1-\eps}}(\rho\|\sigma) \leq D_s^{\eps+\eta}(\rho\|\sigma) - \log\eta + \log\nu(\sigma) - \log(1-\eps).
\end{align*}
\end{lemma}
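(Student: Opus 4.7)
The plan is to witness the claimed upper bound by constructing a sub-normalized $\brho \in \Be{\sqrt{1-\eps}}(\rho)$---i.e., satisfying $F^{2}(\rho,\brho)\geq \eps$---with $\brho \leq 2^{\lambda}\sigma$ for $\lambda$ equal to the right-hand side. Set $\gamma := D_{s}^{\eps+\eta}(\rho\|\sigma)$; then the projection $\Pi := \{\rho \leq 2^{\gamma}\sigma\}$ onto the subspace where $\rho$ is dominated by $2^{\gamma}\sigma$ satisfies the operator inequality $\Pi\rho\Pi \leq 2^{\gamma}\Pi\sigma\Pi$, together with an $(\eps+\eta)$-type control on the mass escaping the good region.

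The fundamental obstruction to taking $\brho$ proportional to $\Pi\rho\Pi$ is that $\Pi\sigma\Pi$ is not dominated by $\sigma$ when $\Pi$ fails to commute with $\sigma$. The standard fix is pinching: the map $\cP_{\sigma}(X):=\sum_{j}P_{j}XP_{j}$ associated with the spectral decomposition $\sigma=\sum_{j}q_{j}P_{j}$ (where the $q_{j}$ are the $\nu(\sigma)$ distinct eigenvalues of $\sigma$) is unital CP with $\cP_{\sigma}(\sigma)=\sigma$, and obeys the pinching inequality $X \leq \nu(\sigma)\,\cP_{\sigma}(X)$ for every $X\geq 0$. Applied to our setting, this allows one to replace $\Pi$ by a $\sigma$-commuting approximant and absorb the non-commutativity cost into a single $\log\nu(\sigma)$ shift of the threshold, yielding an operator bound of the form $\brho \leq 2^{\gamma+\log\nu(\sigma)}\sigma$ for a suitably constructed candidate.

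The fidelity is controlled by the elementary identity $\sqrt{\rho}\,\Pi\rho\Pi\,\sqrt{\rho}=(\sqrt{\rho}\Pi\sqrt{\rho})^{2}$ valid for any projection $\Pi$, which gives $F^{2}(\rho,\Pi\rho\Pi)=(\tr\Pi\rho)^{2}$ directly, together with the homogeneity $F^{2}(\rho,cX)=c\,F^{2}(\rho,X)$ for $c>0$, allowing a scalar rescaling of the candidate state to enter the estimate as an additive $\log c$ in the $\Dmax$ bound.

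The chief technical obstacle is to arrange the rescaling and parameter choices so that the resulting $\log c$ matches exactly the combination $-\log\eta-\log(1-\eps)$ on the right-hand side, rather than a weaker constant. The cleanest route, following Tomamichel and Hayashi, is to chain two intermediate inequalities via the hypothesis-testing relative entropy $D_{H}^{\eps+\eta}(\rho\|\sigma):=-\log\min\{\tr(Q\sigma):0\leq Q\leq I,\ \tr(Q\rho)\geq 1-(\eps+\eta)\}$: the $-\log\eta$ correction arises from an elementary Markov-type inequality converting the $D_{s}^{\eps+\eta}$ statement into a $D_{H}^{\eps+\eta}$-style operator test at a $\log(1/\eta)$-shifted threshold, while the $-\log(1-\eps)$ correction appears when renormalizing the sub-normalized operator $\sqrt{Q}\,\rho\,\sqrt{Q}$, for an optimal hypothesis-testing operator $Q$, into a state inside the prescribed fidelity ball $\Be{\sqrt{1-\eps}}(\rho)$. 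The pinching step contributes the $\log\nu(\sigma)$ term and remains the only conceptually non-trivial ingredient; the remaining bookkeeping is $\eps$-calculus, though some care is needed to coordinate the parameters $\eps$, $\eta$, the fidelity radius $\sqrt{1-\eps}$, and the rescaling constant simultaneously so that a single witness $\brho$ realizes both the fidelity constraint and the operator bound at the claimed $\lambda$.
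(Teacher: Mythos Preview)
Your proposal is correct and coincides with the paper's approach: the paper does not give a self-contained proof but only a remark that the lemma follows by chaining Lemma~12 and Proposition~13 of \cite{TH13} through the hypothesis-testing relative entropy $D_H^\eps$, which is precisely the route you outline in your final paragraph (Markov-type bound for $D_s\to D_H$ contributing $-\log\eta$; pinching plus renormalization for $D_H\to \Dmax^{\sqrt{1-\eps}}$ contributing $\log\nu(\sigma)-\log(1-\eps)$). Your earlier paragraphs in effect sketch the internal mechanics of those two cited results, so there is no substantive difference.
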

\begin{remark}
\Cref{lem:D_s-D_H-D_max} can be obtained by combining two results in \cite{TH13} which bound the information spectrum entropy and the smooth max-relative entropy in terms of the \emph{hypothesis testing relative entropy} \cite{WR12}. These bounds are proved in Lemma 12 and Proposition 13 of \cite{TH13}, respectively.
\end{remark}

The second order asymptotic expansion of the smooth max-relative entropy and the information spectrum relative entropy are derived in~\cite{TH13}. While the former is one of the main results therein, the latter is only proved implicitly. 
\begin{proposition}\label{prop:entropies-SOA}
{\normalfont \cite{TH13}} Let $\eps\in(0,1)$ and $\rho,\sigma\in\cD(\cH)$, then we have the following second order asymptotic expansions for the smooth max-entropy and the information spectrum relative entropy:
\begin{align*}
\Dmax^\eps(\rho^\n\|\sigma^\n) &= n D(\rho\|\sigma) - \sqrt{n V(\rho\|\sigma)}\,\invP{\eps^2} + \cO(\log n)\\
D_s^\eps(\rho^\n\|\sigma^\n) &= n D(\rho\|\sigma) + \sqrt{n V(\rho\|\sigma)}\,\invP{\eps} + \cO(\log n)
\end{align*}
\end{proposition}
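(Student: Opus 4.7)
The first line of the proposition is precisely one of the main theorems of \cite{TH13}, so it can be invoked directly. For the $D_s^\eps$ expansion, the authors note that it is only implicit in \cite{TH13}; my plan is to derive it from the first line via the one-shot inequalities between $D_s^\eps$ and $\Dmax^\eps$, together with a classical central-limit argument for the matching direction.

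For the lower bound I rearrange \Cref{lem:D_s-D_H-D_max} to read
\[
D_s^{\eps'+\eta}(\rho^{\otimes n}\|\sigma^{\otimes n}) \geq \Dmax^{\sqrt{1-\eps'}}(\rho^{\otimes n}\|\sigma^{\otimes n}) + \log\eta - \log\nu(\sigma^{\otimes n}) + \log(1-\eps'),
\]
and set $\eps' = \eps - 1/\sqrt n$ and $\eta = 1/\sqrt n$ so that the left-hand side equals $D_s^\eps(\rho^{\otimes n}\|\sigma^{\otimes n})$. Since $\nu(\sigma^{\otimes n})$ is polynomial in $n$, all additive terms other than $\Dmax^{\sqrt{1-\eps'}}$ contribute only $\cO(\log n)$. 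Substituting the first-line expansion and using $\Phi^{-1}(1-\eps)=-\Phi^{-1}(\eps)$ together with \Cref{lem:phi-trick} to absorb the $\eps'\to\eps$ shift, I obtain the desired lower bound $nD(\rho\|\sigma) + \sqrt{nV(\rho\|\sigma)}\,\Phi^{-1}(\eps) + \cO(\log n)$.

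For the upper bound I would use the reverse one-shot inequality between $D_s^\eps$ and $\Dmax^\eps$ (also available in \cite{TH13} and obtained by a similar argument). Alternatively, a direct proof via Nussbaum–Szkol\'a works: the classical distributions $P(x,y)=p_x|\langle e_x|f_y\rangle|^2$ and $Q(x,y)=q_y|\langle e_x|f_y\rangle|^2$, obtained from the spectral decompositions $\rho=\sum_x p_x|e_x\rangle\langle e_x|$ and $\sigma=\sum_y q_y|f_y\rangle\langle f_y|$, satisfy $D(P\|Q)=D(\rho\|\sigma)$ and $V(P\|Q)=V(\rho\|\sigma)$. After pinching $\rho^{\otimes n}$ onto the eigenspaces of $\sigma^{\otimes n}$, at a multiplicative cost polynomial in $n$ which is absorbed into $\cO(\log n)$, the trace $\tr(\rho^{\otimes n}\{\rho^{\otimes n}\leq 2^\gamma\sigma^{\otimes n}\})$ becomes a classical tail probability of the iid sum $\sum_{i=1}^n \log\tfrac{P(X_i,Y_i)}{Q(X_i,Y_i)}$ with mean $D(\rho\|\sigma)$ and variance $V(\rho\|\sigma)$, and the Berry–Esseen theorem then identifies the $\eps$-quantile up to $\cO(1/\sqrt n)$, giving the matching upper bound.

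The main obstacle is handling the non-commutativity of $\rho$ and $\sigma$ in the upper bound: the pinching step is essential for reducing to a classical CLT problem, and one must verify carefully that both the multiplicative pinching cost and the smoothing induced by replacing $\rho^{\otimes n}$ by its pinched version affect the expansion only at order $\cO(\log n)$. Once this is handled, combining the lower and upper bounds gives the stated equality.
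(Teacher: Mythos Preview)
The paper itself gives no proof of this proposition; it simply cites \cite{TH13} and remarks that the $D_s^\eps$ expansion is only implicit there. Your reconstruction via \Cref{lem:D_s-D_H-D_max} for the lower bound is exactly the route \cite{TH13} takes (with the hypothesis-testing relative entropy $D_H^\eps$ as the intermediary), and the parameter choices and $\cO(\log n)$ bookkeeping are correct.

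For the upper bound, your first option is the right one and should be the primary argument: the elementary inequality $D_s^\eps(\rho\|\sigma)\leq D_H^\eps(\rho\|\sigma)$ (obtained by taking $Q=\{\rho>2^\gamma\sigma\}$ as a feasible test in the variational definition of $D_H^\eps$), combined with the second-order expansion of $D_H^\eps$ from \cite{TH13,Li14}, immediately gives the matching upper bound.

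Your alternative route, however, has a genuine gap as written. Pinching $\rho^{\otimes n}$ with respect to $\sigma^{\otimes n}$ does not directly tame $\tr\bigl(\rho^{\otimes n}\{\rho^{\otimes n}\leq 2^\gamma\sigma^{\otimes n}\}\bigr)$: the spectral projector $\{\rho^{\otimes n}\leq 2^\gamma\sigma^{\otimes n}\}$ still depends on the unpinched state, and there is no operator-monotonicity relating it to $\{\bar\rho_n\leq 2^{\gamma'}\sigma^{\otimes n}\}$. Worse, even if you replace $\rho^{\otimes n}$ by its pinched version $\bar\rho_n$ throughout, the diagonal of $\bar\rho_n$ in the $\sigma^{\otimes n}$-eigenbasis is the $y$-\emph{marginal} $P_Y^{\otimes n}$ of the Nussbaum--Szko\l a distribution, not $P^{\otimes n}$ itself; since $D(P_Y\|q)\leq D(\rho\|\sigma)$ in general, you do not recover the correct first-order term this way. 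The Nussbaum--Szko\l a reduction works for $D_H^\eps$ precisely because one is free to choose the test, which is why routing through $D_H^\eps$ succeeds while a direct pinching attack on the fixed-projector quantity $D_s^\eps$ does not.
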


\section{The flip operator}

In the proof of \Cref{thm:et-one-shot} in \Cref{sec:thm-4.1-proof} we make use of the following operator and its properties:

\begin{definition}[Flip operator on a subspace]\label{def:flip-operator}
Let $\cK\subseteq \cH$ be a subspace of the Hilbert space $\cH$. We define the \emph{flip operator} $F_\cK$ as the linear extension of the operator defined by the action 
\begin{align*}
F_\cK(|\psi_1\rangle\ox|\psi_2\rangle) = |\psi_2\rangle \ox |\psi_1\rangle
\end{align*}
for every $|\psi_1\rangle,|\psi_2\rangle\in \cK$. For a Hermitian operator $X\in\cB(\cH)$, we define $F_X \equiv F_{\supp(X)}$.
\end{definition}

The proofs of the following properties of the flip operator can be found e.g.~in~\cite{DBWR14}.

\begin{lemma} {\normalfont \cite{DBWR14}} \label{lem:flip-operator}
Let $\cK\subseteq \cH$ be a subspace of the Hilbert space $\cH$ with $\dim\cK = d_\cK$, then the following properties hold:
\begin{enumerate}[{\normalfont (i)}]
\item Given an orthonormal basis $\lbrace |v_i\rangle\rbrace_{i=1}^{d_\cK}$ of $\cK$, the flip operator can be expressed as
\begin{align*}
F_\cK = \sum_{i,j=1}^{d_\cK} |v_i\rangle\langle v_j|\ox |v_j\rangle \langle v_i|.
\end{align*}
\item For operators $X,Y$ acting on $\cK$, we have $\tr(XY) = \tr(F_\cK(X\ox Y))$.
\item The flip operator is idempotent on its support, i.e.,~$F_\cK^2 = \Pi_\cK\ox\Pi_\cK$ where $\Pi_\cK$ is the orthogonal projection onto $\cK$.
\item We have:
\begin{align*}
\int_{\ket\psi\in \cK}\diff\mu(\psi) \psi\ox\psi = \frac{1}{d_\cK(d_\cK+1)}(\Pi_\cK\ox\Pi_\cK + F_\cK)
\end{align*}

\item More generally,
$$ \int_P\dd\mu(P) P\otimes P=  \gamma_1 \Pi_\cK\otimes \Pi_\cK + \gamma_2 F_\cK$$
where the integral is taken over rank-$m$ orthogonal projections with respect to the Haar measure and
\begin{align*}
\gamma_1=\frac{m(md_\cK-1)}{d_\cK(d_\cK^2-1)}   \qquad \qquad \gamma_2= \frac{m(d_\cK-m)}{d_\cK(d_\cK^2-1)}.
\end{align*}

\end{enumerate}
\end{lemma}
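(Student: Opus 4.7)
The proof of this lemma collects five essentially elementary facts about the flip operator; the first three reduce to direct computation in the orthonormal basis $\{|v_i\rangle\}$ of $\cK$, while the last two follow from a Schur--Weyl symmetry argument combined with taking traces to fix the scalar coefficients.

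For part (i), I would expand arbitrary vectors $|\psi_1\rangle=\sum_k a_k|v_k\rangle$ and $|\psi_2\rangle=\sum_\ell b_\ell|v_\ell\rangle$ in $\cK$, and verify that applying the right-hand side to $|\psi_1\rangle\otimes|\psi_2\rangle$ collapses the double sum via $\langle v_j|v_k\rangle\langle v_i|v_\ell\rangle=\delta_{jk}\delta_{i\ell}$ to give $\sum_{i,j}a_jb_i\,|v_i\rangle\otimes|v_j\rangle=|\psi_2\rangle\otimes|\psi_1\rangle$, matching \Cref{def:flip-operator}. Part (ii) then follows by substitution: for $X,Y$ supported on $\cK$,
\begin{equation*}
\tr\bigl(F_\cK(X\otimes Y)\bigr)=\sum_{i,j}\langle v_j|X|v_i\rangle\langle v_i|Y|v_j\rangle=\sum_j\langle v_j|XY|v_j\rangle=\tr(XY).
\end{equation*}
Part (iii) is the computation $F_\cK^2=\sum_{i,j,k,\ell}|v_i\rangle\langle v_j|v_k\rangle\langle v_\ell|\otimes|v_j\rangle\langle v_i|v_\ell\rangle\langle v_k|=\sum_{i,j}|v_i\rangle\langle v_i|\otimes|v_j\rangle\langle v_j|=\Pi_\cK\otimes\Pi_\cK$, using orthogonality of the basis.

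For part (iv), I would observe that the integral $M\coloneqq\int\diff\mu(\psi)\,\psi\otimes\psi$ is invariant under the twirl $M\mapsto(U\otimes U)M(U\otimes U)^\dagger$ for every unitary $U$ acting on $\cK$, because the Haar measure on unit vectors of $\cK$ is unitarily invariant. By Schur--Weyl duality, the commutant of this $U\otimes U$ action on $\cK\otimes\cK$ is two-dimensional, spanned by $\Pi_\cK\otimes\Pi_\cK$ and $F_\cK$; moreover, since each $\psi\otimes\psi$ lies in the symmetric subspace, $M$ must be a positive scalar multiple of its projector $\tfrac12(\Pi_\cK\otimes\Pi_\cK+F_\cK)$, so $M=c(\Pi_\cK\otimes\Pi_\cK+F_\cK)$. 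Taking the trace and using that $\tr(\Pi_\cK\otimes\Pi_\cK)=d_\cK^2$ and $\tr(F_\cK)=d_\cK$ (the latter from part (ii) with $X=Y=\Pi_\cK$) gives $1=c\,d_\cK(d_\cK+1)$, which fixes $c$ as claimed.

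Part (v) is handled by the same symmetry argument: the Haar integral over rank-$m$ projections $P$ on $\cK$ is also $U\otimes U$ invariant, so $N\coloneqq\int\diff\mu(P)\,P\otimes P=\gamma_1\Pi_\cK\otimes\Pi_\cK+\gamma_2F_\cK$. To solve for the two scalars I would compute two independent traces: $\tr N=\int\tr(P)^2\diff\mu(P)=m^2$ and, using parts (ii) and (iii), $\tr(F_\cK N)=\int\tr(P^2)\diff\mu(P)=m$, giving the linear system
\begin{align*}
\gamma_1 d_\cK^2+\gamma_2 d_\cK &= m^2,\\
\gamma_1 d_\cK+\gamma_2 d_\cK^2 &= m,
\end{align*}
whose unique solution is the stated pair $\gamma_1=m(md_\cK-1)/[d_\cK(d_\cK^2-1)]$ and $\gamma_2=m(d_\cK-m)/[d_\cK(d_\cK^2-1)]$. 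None of the steps poses a genuine obstacle; the only care required is to keep all operators supported on $\cK$ so that $\Pi_\cK$ behaves as an identity in parts (ii)--(iii) and Schur's lemma applies cleanly in parts (iv)--(v).
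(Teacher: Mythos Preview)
Your proof is correct and follows the standard route for these identities. The paper itself does not supply a proof of this lemma: it merely states the result and refers the reader to \cite{DBWR14} for details, so there is no in-paper argument to compare against. Your direct basis computations for (i)--(iii) and the Schur--Weyl/twirl argument with trace conditions for (iv)--(v) are exactly the conventional proofs one finds in the literature, and nothing is missing.
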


\section{Proof of \texorpdfstring{\Cref{thm:et-one-shot}}{Theorem \ref{thm:et-one-shot}}: Analysis of the second term in (\ref{eq:two-terms})}\label{app:analysis-t2}
Recall that
$$T_2 = \int_P \dd \mu(P) \int_{\ket\psi\in \supp(P)} \dd \mu(\psi)  \Big\|\big(  \Gamma_{\cN(S)}^{-1/2}\circ  \cN\circ \Gamma_{\sqtrho} \big)(\psi) \,\cdot\,  \cN\big(     S^{-1/2}\sqtrho \psi \sqtrho S^{-1/2}  -   \sqtrho \psi \sqtrho          \big) \Big\|_1.$$
Let us denote the 1-norm expression under the integral by $T'_2(P, \psi)$, and observe that 
\begin{align*}
T'_2(P, \psi) 
& \leq \left\|\left(  \Gamma_{\cN(S)}^{-1/2}\circ  \cN\circ \Gamma_{\sqtrho} \right)(\psi)\right\|_\infty \cdot \left\|\cN\left(     S^{-1/2}\sqtrho \psi \sqtrho S^{-1/2}  -   \sqtrho \psi \sqtrho          \right) \right\|_1.
\end{align*}
Since $\ket\psi\in \supp(P)$, we have $\psi\leq P$ and 
\begin{align*}
0&\leq \left(  \Gamma_{\cN(S)}^{-1/2}\circ  \cN\circ \Gamma_{\sqtrho} \right)(\psi)\\
& = \Gamma_{\cN(S)}^{-1/2} \circ \cN(\sqtrho \psi \sqtrho)\\
& \leq \Gamma_{\cN(S)}^{-1/2} \circ\cN(\sqtrho P \sqtrho)\\
& = \Gamma_{\cN(S)}^{-1/2} \circ\cN(S)\\
& \leq I_B.
\end{align*}
Therefore, 
\begin{align*}
T'_2(P, \psi) 
& \leq \left\|\cN\left(     S^{-1/2}\sqtrho \psi \sqtrho S^{-1/2}  -   \sqtrho \psi \sqtrho          \right) \right\|_1\\
& \leq \left\|     S^{-1/2}\sqtrho \psi \sqtrho S^{-1/2}  -   \sqtrho \psi \sqtrho          \right\|_1,
\end{align*}
where in the last line we used the fact that $\cN$ is a CPTP map.

Let $\ket\phi = S^{-1/2}\sqtrho \ket\psi$ and $\ket{\phi'}=\sqtrho\ket\psi$. Then, as verified in \eqref{eq:normalization}, the vector $\ket\phi$ is normalized. Thus, a straightforward calculation yields
\begin{align*}
\left\|     S^{-1/2}\sqtrho \psi \sqtrho S^{-1/2}  -   \sqtrho \psi \sqtrho          \right\|_1 & = \|\phi - \phi'\|_1\\
& = \left[1+ \left\|\ket{\phi'}\right\|^4 - \left\|\ket{\phi'}\right\|^2  - |\langle \phi | \phi'\rangle |^2\right]^{1/2}\\
& = \left[1+ \bra\psi \trho \ket \psi^2 - \bra \psi\trho\ket\psi - \bra\psi \sqtrho S^{-1/2}\sqtrho\ket\psi^2\right]^{1/2}.
\end{align*}
Then, using the concavity of the square root function we have 
\begin{multline*}
\int_{\ket\psi\in \supp(P)}\dd \mu(\psi) T'_2(P, \psi) \\ \leq \left[ \int_{\ket\psi\in \supp(P)} \dd\mu(\psi)  \left(1+ \bra\psi \trho \ket \psi^2 - \bra \psi\trho\ket\psi - \bra\psi \sqtrho S^{-1/2}\sqtrho\ket\psi^2 \right) \right]^{1/2}.
\end{multline*}
We compute each term under the integral individually. 
The easiest one is the third one:
$$\int_{\ket\psi\in \supp(P)} \dd\mu(\psi)  \bra \psi\trho\ket\psi =  \int_{\ket\psi\in \supp(P)} \dd\mu(\psi) \tr(\trho \psi)= \frac{1}{m}\tr(\trho P) = \frac{d}{m}\tr(\rho P)$$
For the second term, we compute using \Cref{lem:flip-operator}:
\begin{align*}
\int_{\ket\psi\in \supp(P)} \dd\mu(\psi)  \bra \psi\trho\ket\psi^2 & = \int_{\ket\psi\in \supp(P)} \dd\mu(\psi) \tr(\trho \psi \trho \psi)\\
& = \int_{\ket\psi\in \supp(P)} \dd\mu(\psi) \tr(F_P (\trho \psi \otimes \trho \psi ))\\
& = \int_{\ket\psi\in \supp(P)} \dd\mu(\psi) \tr(F_P (\trho\otimes \trho) (\psi\otimes \psi) )\\
& = \frac{1}{m(m+1)} \tr(F_P (\trho\otimes \trho) (P\otimes P + F_P))\\
& = \frac{1}{m(m+1)} \left(   \tr\left( (\trho P)^2\right) + (\tr(\trho P))^2    \right)
\end{align*}
We express the last term as
\begin{align*}
\bra\psi \sqtrho S^{-1/2}\sqtrho\ket\psi^2 & = \bra \psi\sqtrho S^{-1/2} \sqtrho \ket \psi \bra \psi \sqtrho S^{-1/2}\sqtrho \ket \psi \\
& =  \tr\left(   S^{-1/2} \sqtrho  \psi  \sqtrho S^{-1/2}\sqtrho  \psi   \sqtrho \right)\\
& = \exp D_2\left(\sqtrho \psi \sqtrho \| S\right).
\end{align*}
Putting these together, we find that
\begin{multline*}
\int_{\ket\psi\in \supp(P)}\dd \mu(\psi) T'_2(P, \psi) \leq \left[  1+ \frac{1}{m(m+1)} \left(   \tr\left( (\trho P)^2\right) + (\tr(\trho P))^2    \right)\right. \\
 \left. - \frac{d}{m}\tr(\rho P) - \int_{\ket\psi\in \supp(P)} \dd\mu(\psi)  \exp D_2\left(\sqtrho \psi \sqtrho \| S\right)    \right]^{1/2}.
\end{multline*}
Again using the concavity of the square root function, we obtain
\begin{align*}
T_2^2 & = \left[\int_P\dd\mu(P) \int_{\ket\psi\in \supp(P)}\dd \mu(\psi) T'_2(P, \psi)\right]^2 \\
& \leq  1+ \frac{1}{m(m+1)} \int_P\dd\mu(P)   \tr\left( (\trho P)^2\right) + \frac{1}{m(m+1)} \int_P\dd\mu(P)(\tr(\trho P))^2   \\
& \qquad  - \frac{d}{m}  \int_P\dd\mu(P)\tr(\rho P) -  \int_P\dd\mu(P)\int_{\ket\psi\in \supp(P)} \dd\mu(\psi)  \exp D_2\left(\sqtrho \psi \sqtrho \| S\right).   
\end{align*}

Observe that 
$$\int_{P} \dd\mu(P) P = \frac{m}{d} I\qquad \text{and}\qquad \int_P\dd\mu(P) P\otimes P=  \gamma I_A\otimes I_A + \kappa F_A$$
where
$$ \gamma=\frac{m(md-1)}{d(d^2-1)}   \qquad \text{and}\qquad \kappa= \frac{m(d-m)}{d(d^2-1)}.$$
We note that $\gamma d^2+ \kappa d= \tr(P\otimes P) = m^2$.
Therefore,
$$\frac{d}{m}  \int_P\dd\mu(P)\tr(\rho P) = 1,$$
and
\begin{align*}
\int_P\dd\mu(P)   \tr\left( (\trho P)^2\right)& = \int_P\dd\mu(P)  \tr(F_A (\trho\otimes \trho) (P\otimes P)) \\
& = \gamma\tr(F_A(\trho\otimes \trho)) + \kappa \tr(F_A(\trho\otimes \trho) F_A)\\
& = \gamma\tr(\trho^2) + \kappa\tr(\trho)^2\\
& = \gamma d^2 \tr(\rho^2) + \kappa d^2.
\end{align*}
We similarly have
\begin{align*}
\int_P\dd\mu(P)(\tr(\trho P))^2 & = \int_P\dd\mu(P) \tr((\trho\otimes \trho) (P\otimes P))\\
& = \gamma\tr(\trho\otimes \trho) + \kappa \tr((\trho\otimes \trho) F_A)\\
&= \gamma d^2  + \kappa d^2 \tr\left(\rho^2\right).
\end{align*}
Putting these together, we arrive at
\begin{align*}
T_2^2 & \leq \frac{1}{m(m+1)}  (\gamma d^2 + \kappa d^2)(1+\tr(\rho^2) )-  \int_P\dd\mu(P)\int_{\ket\psi\in \supp(P)} \dd\mu(\psi)  \exp D_2\left(\sqtrho \psi \sqtrho \| S\right)\\
& \leq 1+\tr(\rho^2)-  \int_P\dd\mu(P)\int_{\ket\psi\in \supp(P)} \dd\mu(\psi)  \exp D_2\left(\sqtrho \psi \sqtrho \| S\right),
\end{align*}
where in the second line we used $\gamma d^2 \leq m^2$ and $\kappa d^2\leq m$. To get an upper bound on this expression, we need a lower bound on the last term with the double integral. For this we repeat the same process as before: we first change the order of integral, then write the result in terms of dephasing maps, and finally use \Cref{lem:key-lemma}. We first have
\begin{align*}
\int_P\dd\mu(P)\int_{\ket\psi\in \supp(P)} \dd\mu(\psi)  & \exp D_2\left(\sqtrho \psi \sqtrho \| S\right)\\
& = \int_P\dd\mu(P)\int_{\ket\psi\in \supp(P)} \dd\mu(\psi)  \exp D_2\left(\sqtrho \psi \sqtrho \| \sqtrho P\sqtrho \right)\\
& = \int_{\psi} \dd\mu(\psi) \int_{P':\supp(P')\subseteq \ket\psi^{\perp}}\dd\mu(P') \exp D_2\left(\sqtrho \psi \sqtrho \| \sqtrho P\sqtrho \right)\\
&  \geq \int_{\psi} \dd \psi       \exp D_2  \left(\sqtrho \psi \sqtrho \| \alpha \sqtrho \psi \sqtrho + \beta \rho \right),
\end{align*}
where in the last line we used the joint convexity of $\exp D_2(\cdot \|\cdot)$. Next,  writing the result in terms of dephasing maps and using \Cref{lem:key-lemma} as before, we arrive at
\begin{align*}
\int_P\dd\mu(P)\int_{\ket\psi\in \supp(P)} \dd\mu(\psi)   \exp D_2\left(\sqtrho \psi \sqtrho \middle\| S\right) \geq \frac{1}{md} \exp D_2 \left(\Psi^\rho \| \alpha' \Psi^\rho + \beta' \pi_R\otimes \rho \right).
\end{align*}
Finally, using \Cref{lem:collision-spectrum} we find that for any $\delta_2\in(0, 1)$ we have
\begin{multline*}
\int_P\dd\mu(P)\int_{\ket\psi\in \supp(P)} \dd\mu(\psi)  \exp D_2\left(\sqtrho \psi \sqtrho \| S\right)\\
 \geq  \frac{1}{md} (1-\delta_2) \left[ \alpha'  + \beta' \exp\left(-D_s^{\delta_2}(\Psi^\rho \|    \pi_R\otimes \rho)  \right)   \right]^{-1}.
\end{multline*}
We now assume that 
$$\frac{1}{md} (1-\delta_2) \left[ \alpha'  + \beta' \exp\left(-D_s^{\delta_2}(\Psi^\rho \|    \pi_R\otimes \rho)  \right)   \right]^{-1} \geq 1-\eps_2$$
for some $\eps_2>0$. Repeating the same calculations as before, this inequality holds if 
$$\log m \leq D_s^{\delta_2} (\Psi^\rho\| I_R\otimes \rho ) + \log\frac{\eps_2 - \delta_2}{1- \eps_2}.$$
Then assuming the above inequality, we obtain 
\begin{align*}
T_2^2 & \leq 1+\tr(\rho^2)-  (1-\eps_2)\leq \tr(\rho^2) + \eps_2.
\end{align*}


\section{Proof of \texorpdfstring{\Cref{eq:56}}{Equation (\ref{eq:56})}} \label{app:proof-4-10}
Observe that 
$$\log\frac{\eps_2-\delta_2}{1-\eps_2} \geq -\log n,$$
and that
$$D_s^{\delta_2} \left((\Psi^\rho)^{\otimes n}\| I_{R^n}\otimes \rho^{\otimes n} \right) \geq D_s^{\frac {3} n} \left((\Psi^\rho)^{\otimes n}\| I_{R^n}\otimes \rho^{\otimes n} \right),$$
since $\delta_2\geq 3/n$ by \eqref{eq:eps-delta}.
 Then it suffices to show that 
\begin{align}\label{eq:56-2}
D_s^{\frac {3} n} \left((\Psi^\rho)^{\otimes n}\| I_{R^n}\otimes \rho^{\otimes n} \right)\geq  n D(\Psi^\rho\| I_R\otimes \rho) - O(\log n\sqrt n).
\end{align}

Using \Cref{lem:D_s-D_H-D_max} we find that 
\begin{align*}
D_s^{\frac {3} n} \left((\Psi^\rho)^{\otimes n}\| I_{R^n}\otimes \rho^{\otimes n} \right) \geq D_{\max}^{\sqrt{1-2/n}}\left((\Psi^\rho)^{\otimes n}\| I_{R^n}\otimes \rho^{\otimes n} \right) - \log \nu(I_{R^n}\otimes \rho^{\otimes n}) + \log(1/n- 2/n^2)
\end{align*}
where $\nu(I_{R^n}\otimes \rho^{\otimes n})$ is the number of different eigenvalues of $I_{R^n}\otimes \rho^{\otimes n}$. We note that $\nu(I_{R^n}\otimes \rho^{\otimes n})$ grows polynomially in $n$. 

Let $\mathcal B= \cB_{\sqrt{1-2/n}}\left((\Psi^{\rho})^{\otimes n}\right)$. We continue to bound:
\begin{align*}
D_{\max}^{\sqrt{1-2/n}}\left((\Psi^\rho)^{\otimes n}\| I_{R^n}\otimes \rho^{\otimes n} \right)& = \min_{\omega_n\in \mathcal B} D_{\max}(\omega_n \| I_{R^n}\otimes \rho^{\otimes n})\\
& \geq \min_{\sigma_n} \min_{\omega_n\in \mathcal B} D_{\max}(\omega_n \| I_{R^n}\otimes \sigma_n)\\
& = \min_{\omega_n\in \mathcal B} \min_{\sigma_n} D_{\max}(\omega_n \| I_{R^n}\otimes \sigma_n)\\
& = - \max_{\omega_n\in \mathcal B}\left[- \min_{\sigma_n} D_{\max}(\omega_n \| I_{R^n}\otimes \sigma_n)\right]\\
& = - H_{\min}^{\sqrt{1- 2/n}} (A^n|R^n)_{(\Psi^{\rho})^{\otimes n}}\\
& = H_{\max}^{\sqrt{1-2/n}}(A^n)_{\rho^{\otimes n}},
\end{align*}
where in the last line we used the duality of the min- and max-entropies, \Cref{lem:Hmin-Hmax}(i).

Next, using \Cref{lem:Hmin-Hmax}(ii), we have
$$ H_{\max}^{\sqrt{1-2/n}}(A^n)_{\rho^{\otimes n}} \geq H_{\min}^{1/n^2}(A^n)_{\rho^{\otimes n}} + \log\left(1-\left(\sqrt{1-\frac 2 n} + \frac{1}{n^2}\right)^2\right).$$
Observe that the following holds for sufficiently large $n$:
\begin{align*}
1-\left(\sqrt{1-\frac 2 n} + \frac{1}{n^2}\right)^2 &= 1- \left( 1-\frac 2 n + \frac{1}{n^4}   + \frac{1}{n^2}\sqrt{1- \frac 2 n}  \right)\\
&= \frac 2 n - \frac{1}{n^4}   - \frac{1}{n^2}\sqrt{1- \frac 2 n}  \\
& \geq \frac 1 n.
\end{align*}
Therefore,
$$ H_{\max}^{\sqrt{1-2/n}}(A^n)_{\rho^{\otimes n}} \geq H_{\min}^{1/n^2}(A^n)_{\rho^{\otimes n}} - O(\log n).$$
Finally, using \Cref{lem:Hmin-first-order} we have
$$H_{\min}^{1/n^2}(A^n)_{\rho^{\otimes n}} \geq nH(A)_\rho - O\left(\sqrt{g(1/n^2)}\right)\sqrt n,$$
where $g(t)=- \log (1-\sqrt{1-t^2}).$
Observe that 
$\sqrt{1-t^2}\leq 1- t^2/2$. Thus, $-\log(1-\sqrt{1-t^2}) \leq -\log t^2/2$, and hence
\begin{align*}
H_{\min}^{1/n^2}(A^n)_{\rho^{\otimes n}} \geq nH(A)_\rho - O(\log n)\sqrt n.
\end{align*}
Putting all the above inequalities together yields \eqref{eq:56}.



\begin{thebibliography}{10}


\bibitem{BK02} H.~Barnum,~and E.~Knill, ``Reversing quantum dynamics with near-optimal quantum and classical fidelity," J.~Math.~Phys. \textbf{43}(5) 2097-2106 (2002).

\bibitem{BKN00} H.~Barnum, E.~Knill, and M.~A.~Nielsen, ``On quantum fidelities and channel capacities," IEEE Trans.~Inf.~Theory \textbf{46}(4), 1317-1329 (2000).

\bibitem{BG13} S.~Beigi and A.~Gohari, ``Quantum Achievability Proof via Collision Relative Entropy," IEEE Trans.~Inf.~Theory \textbf{60}(12), 7980-7986 (2014).

\bibitem{BDS97} C.~H.~Bennett, D.~P.~DiVincenzo, and J.~A.~Smolin, ``Capacities of Quantum Erasure Channels," Phys. Rev. Lett. \textbf{78}, 3217-3220 (1997).

\bibitem{BBC+93} C.~H.~Bennett, G.~Brassard, C.~Cr\'epeau, R.~Jozsa, A.~Peres, and W.~K.~Wootters,
 ``Teleporting an unknown quantum state via dual classical and Einstein-Podolsky-Rosen channels," Phys.~Rev.~Lett.~\textbf{70}(13), 1895 (1993).
 
 
\bibitem{BD10} F.~Buscemi and N.~Datta, ``Distilling entanglement from arbitrary resources," J.~Math.~Phys. \textbf{51}(10), 102201 (2010).
 
\bibitem{BD10a} F.~Buscemi and N.~Datta, ``The quantum capacity of channels with arbitrarily correlated noise," IEEE Trans.~Inf.~Theory \textbf{56}(3), 1447-1460 (2010).
 
\bibitem{Dat09} N.~Datta, ``Min- and Max-relative entropies and a new entanglement monotone," IEEE Trans.~Inf.~Theory \textbf{55}(6), 2816-2826  (2009).
 
\bibitem{DH11} N.~Datta and M.-H.~Hsieh, ``The apex of the family tree of protocols: Optimal rates and resource inequalities," New J.~Phys.~\textbf{13}(9) 093042  (2011).
 
\bibitem{DL14b} N.~Datta and F.~Leditzky, ``Second-Order Asymptotics for Source Coding, Dense Coding, and Pure-State Entanglement Conversions," IEEE Trans.~Inf.~Theory \textbf{61}(1), 582-608 (2015).
 
\bibitem{Dev05} I.~Devetak, ``The private classical capacity and quantum capacity of a quantum channel,"
IEEE Trans.~Inf.~Theory \textbf{51}(1), 44-55 (2005).
 
\bibitem{DS05} I.~Devetak and P.~W.~Shor, ``The capacity of a quantum channel for simultaneous transmission of classical and quantum information," Commun.~Math.~Phys. \textbf{256}(2), 287-303 (2005).
 
\bibitem{DBWR14} F.~Dupuis, M.~Berta, J.~Wullschleger, and R.~Renner, ``One-Shot Decoupling," Commun.~Math.~Phys. \textbf{328}(1), 251-284  (2014).
 
\bibitem{FR14} O.~Fawzi and R.~Renner, ``Quantum conditional mutual information and approximate Markov chains," Commun.~Math.~Phys. \textbf{340}(2), 575-611 (2015).

\bibitem{Hay09} M.~Hayashi, ``Information spectrum approach to second-order coding rate in channel coding," IEEE Trans.~ Inf.~Theory \textbf{55}(11), 4947-4966  (2009).

\bibitem{Hay08} M.~Hayashi, ``Second-order asymptotics in fixed-length source coding and intrinsic randomness," IEEE Trans.~Inf.~Theory \textbf{54}(10), 4619-4637 (2008).

\bibitem{HayashiTomamichel14} M.~Hayashi and M.~Tomamichel, ``Correlation Detection and an Operational Interpretation of the R\'enyi Mutual Information," 2015 IEEE International Symposium on Information Theory
(ISIT), 1447-1451 (2015).

\bibitem{HSW08} P.~Hayden, P.~W.~Shor, and A.~Winter, ``Random quantum codes from Gaussian ensembles
and an uncertainty relation," Open Syst.~Inf.~Dyn. \textbf{15}(1), 71-89  (2008).

\bibitem{HHWY08} P.~Hayden, M.~Horodecki, A.~Winter, and J.~Yard, ``A decoupling approach to the quantum
capacity'' Open Syst.~Inf.~Dyn. \textbf{15}(1), 7-19 (2008).

\bibitem{HJPW04} P.~Hayden, R.~Jozsa, D.~Petz, and A.~Winter, ``Structure of states which satisfy strong subadditivity of quantum entropy with equality,'' Commun.~Math.~Phys. \textbf{246}(2), 359-374  (2004).

\bibitem{HHH99} M.~Horodecki, P.~Horodecki, and R.~Horodecki, ``General teleportation channel, singlet fraction, and quasidistillation," Phys.~Rev.~A \textbf{60}(3), 1888 (1999).

\bibitem{JRS+15} M.~Junge, R.~Renner, D.~Sutter, M.~M.~Wilde, and A.~Winter, ``Universal recovery maps and approximate sufficiency of quantum relative entropy," Ann.~Henri Poincare \textbf{19}(10), 2955-2978 (2018).

\bibitem{KRS09} R.~K\"onig, R.~Renner, and C.~Schaffner, ``The operational meaning of min-and max-entropy," IEEE Trans.~Inf.~Theory \textbf{55}(9), 4337-4347  (2009).

\bibitem{KV13a} V.~Kostina and S.~Verd\'u, ``Nonasymptotic noisy lossy source coding," IEEE Information Theory Workshop (ITW), 1-5 (2013).

\bibitem{KW04} D.~Kretschmann and R.~F.~Werner, ``Tema con variazioni: quantum channel capacity," New J.~Phys. \textbf{6}(1), 26 (2004).

\bibitem{KH13} W.~Kumagai and M.~Hayashi, ``Second order asymptotics for random number generation," 2013 IEEE International Symposium on Information Theory Proceedings (ISIT), 1506-1510 (2013).

\bibitem{Li14} K.~Li, ``Second order asymptotics for quantum hypothesis testing," Ann.~Statistics \textbf{42}(1), 171-189 (2014).

\bibitem{Llo97} S.~Lloyd, ``Capacity of the noisy quantum channel," Phys.~Rev.~A \textbf{55}(3), 1613 (1997). 

\bibitem{MW14} C.~Morgan and A.~Winter, ````Pretty strong" converse for the quantum capacity of degradable
channels," IEEE Trans.~Inf.~Theory \textbf{60}(1), 317-333  (2014).

\bibitem{MDS+13} M.~M\"uller-Lennert, F.~Dupuis, O.~Szehr, S.~Fehr, and M.~Tomamichel, ``On quantum R\'enyi entropies: A new generalization and some properties," J.~Math.~Phys. \textbf{54}(12), 122203 (2013).

\bibitem{HKM10} H.~K.~Ng and P.~Mandayam, ``Simple approach to approximate quantum error correction
based on the transpose channel," Phys.~Rev.~A \textbf{81}(6), 062342  (2010).

\bibitem{Nie02} M.~A.~Nielsen, ``A simple formula for the average gate fidelity of a quantum dynamical
operation," Phys.~Lett.~A \textbf{303}(4), 249-252 (2002).

\bibitem{OP93} M.~Ohya and D.~Petz, \emph{Quantum Entropy and Its Use} (Springer, 1993).

\bibitem{Petz88} D.~Petz, ``Sufficiency of channels over von Neumann algebras," Quart.~J.~Math. \textbf{39}(1), 97-108 (1988).

\bibitem{Petz86} D.~Petz, ``Sufficient Subalgebras and the Relative Entropy of States of a von Neumann Algebra," Commun.~Math.~Phys. \textbf{105}(1), 123-131  (1986).

\bibitem{PPV10} Y.~Polyanskiy, V.~Poor, and S.~Verd\'u, ``Channel coding rate in the finite blocklength regime," IEEE Trans.~Inf.~Theory \textbf{56}(5), 2307-2359 (2010).

\bibitem{Ren05} R.~Renner, ``Security of quantum key distribution," PhD thesis, ETH Z\"urich (2005).

\bibitem{Sha48} C.~E.~Shannon, ``A mathematical theory of communication," Bell System Technical Journal \textbf{27}, 379-423 (1948).

\bibitem{Sho02} P.~W.~Shor, ``The quantum channel capacity and coherent information," Talk at MSRI Workshop on Quantum Computation. Berkeley, CA, USA (2002).

\bibitem{Str62} V.~Strassen, ``Asymptotische Absch\"atzungen in Shannons Informationstheorie," Transactions of the Third Prague Conference on Information Theory, 689-723 1962.

\bibitem{Tom12} M.~Tomamichel, ``A Framework for Non-Asymptotic Quantum Information Theory," PhD thesis, ETH Z\"urich (2012).

\bibitem{TBR15} M.~Tomamichel, M.~Berta, and J.~M.~Renes, ``Quantum coding with finite resources," Nature communications \textbf{7}, 11419 (2016).

\bibitem{TCR09} M.~Tomamichel, R.~Colbeck, and R.~Renner, ``A Fully Quantum Asymptotic Equipartition Property," IEEE Trans.~Inf.~Theory \textbf{55}(12), 5840-5847  (2009).

\bibitem{TCR10} M.~Tomamichel, R.~Colbeck, and R.~Renner, ``Duality between smooth min-and max-entropies," IEEE Trans.~Inf.~Theory \textbf{56}(9), 4674-4681 (2010).

\bibitem{TH13} M.~Tomamichel and M.~Hayashi, ``A hierarchy of information quantities for finite block length analysis of quantum tasks," IEEE Trans.~Inf.~Theory \textbf{59}(11), 7693-7710 (2013).

\bibitem{TWW14} M.~Tomamichel, M.~M.~Wilde, and A.~Winter, ``Strong Converse Rates for Quantum Communication," 2015 IEEE International Symposium on Information Theory (ISIT), 2386-2390 (2015).

\bibitem{WR12} L.~Wang and R.~Renner, ``One-shot classical-quantum capacity and hypothesis testing," Phys.~Rev.~Lett. \textbf{108}(20), 200501 (2012).

\bibitem{WWY14} M.~M.~Wilde, A.~Winter, and D.~Yang, ``Strong converse for the classical capacity of entanglement-breaking and Hadamard channels via a sandwiched R\'enyi relative entropy," Commun.~Math.~Phys. \textbf{331}(2), 593-622 (2014).



\end{thebibliography}
\end{document}